\theoremstyle{plain}
\newtheorem{theorem}{Theorem}[section]
\newtheorem{lemma}[theorem]{Lemma}
\theoremstyle{definition}
\newtheorem{example}[theorem]{Example}
\newtheorem{remark}[theorem]{Remark}
\newcommand{\E}{{\mathbb{E}}}
\newcommand{\R}{{\mathbb{R}}}
\renewcommand{\P}{{P}}
\newcommand{\F}{{\mathbb F}}
\renewcommand{\d}{{\text{d}}}
\newcommand{\1}{{\mathbf{1}}}
\newcommand{\cA}{{\cal A}}
\newcommand{\cC}{{\cal C}}
\newcommand{\cF}{{\cal F}}
\newcommand{\cG}{{\cal G}}
\newcommand{\cH}{{\cal H}}
\newcommand{\cJ}{{\cal J}}
\newcommand{\cK}{{\cal K}}
\newcommand{\be}{\begin{equation}}
\newcommand{\ee}{\end{equation}}
\newcommand{\bea}{\begin{eqnarray}}
\newcommand{\eea}{\end{eqnarray}}
\newcommand{\beast}{\begin{eqnarray*}}
\newcommand{\eeast}{\end{eqnarray*}}
\newcommand{\bproof}{\begin{proof}}
\newcommand{\eproof}{\end{proof}}
\newcommand{\VaR}{\text{VaR}}
\newcommand{\TCE}{\text{TCE}}
\newcommand{\EL}{\text{EL}}
\newcommand{\pe}{\boldsymbol{\pi}}
\newcommand{\bbeta}{\boldsymbol{\beta}}
\newcommand{\phe}{\boldsymbol{\varphi}}
\newcommand{\me}{\boldsymbol{\mu}}
\newcommand{\We}{\boldsymbol{W}}
\newcommand{\Se}{\boldsymbol{S}}
\newcommand{\se}{\boldsymbol{\sigma}}
\newcommand{\bu}{\boldsymbol{u}}
\newcommand{\bn}{\boldsymbol{\nu}}
\newcommand{\Rb}{\boldsymbol{R}}
\newcommand{\Loss}{L}
\begin{document}

  \articletype{~}

  \author[1]{Imke Redeker}
  \author[2]{Ralf Wunderlich}
  \runningauthor{Redeker and Wunderlich}
  \affil[1]{Mathematical Institute, Brandenburg University of Technology Cottbus - Senftenberg, Postbox 101344, 03013 Cottbus, Germany}
  \affil[2]{Mathematical Institute, Brandenburg University of Technology Cottbus - Senftenberg, Postbox 101344, 03013 Cottbus, Germany}
  \title{Portfolio optimization under dynamic risk constraints: continuous vs.~discrete time trading}
  \runningtitle{Portfolio optimization under dynamic risk constraints}
  \subtitle{}
  \abstract{We consider an investor facing a  classical portfolio problem of optimal investment in a log-Brownian stock and a fixed-interest bond, but constrained to choose portfolio and consumption strategies that reduce a dynamic shortfall risk measure.
  	For continuous- and discrete-time financial markets we investigate the loss in expected utility of intermediate consumption and terminal wealth caused by imposing a dynamic risk constraint. We derive the dynamic programming equations for the resulting stochastic optimal control problems and solve them numerically.
  	Our numerical results indicate that the loss of portfolio performance is not too large while the risk is notably reduced. We then investigate time discretization effects and find that the loss of portfolio performance resulting from imposing a risk constraint is typically bigger than the loss resulting from infrequent trading.}
  \keywords{consumption-investment problem, stochastic optimal control,
  	dynamic risk measure, Markov decision problem, discrete-time approximation}
 \classification[MSC]{Primary 91G10; Secondary 93E20, 91G80.}

\maketitle

\section{Introduction}
\label{sec:Introduction}
In a classical consumption-investment problem, an investor endowed with an initial capital consumes a certain amount of the capital and invests the remaining wealth into the financial market. The latter usually consists of one risk-free security and several risky ones. Given a fixed investment horizon, the investor's objective is to find an ``optimal'' consumption-investment strategy in order to maximize the expected utility of wealth at the terminal trading time and of intermediate consumption. This optimization problem is known in the literature as the Merton problem.

After the recent failure of large financial institutions, risk management has received a great deal of attention by traders and regulators in the past few years. Financial institutions have established internal departments that are in charge of risk assessments, and regulatory institutions started to set restrictions to limit the risk exposure of financial institutions. Various risk measures like Value at Risk (VaR) or some tail-expectation-based risk measures like Tail Conditional Expectation (TCE) or Expected Loss (EL) have been applied recently to quantify and control the risk of a portfolio. Usually the risk constraint is static, i.e., it has to hold at the terminal trading time only. In the seminal paper by \textsc{Basak} and \textsc{Shapiro} \cite{bs} the authors compute the VaR of the terminal wealth to control the risk exposure. Their findings indicate that VaR limits, when applied in a static manner, may actually increase risk. This encouraged researchers to consider a risk measure that is based on the risk neutral expectation of loss - the Limited Expected Loss (LEL). The work of \cite{bs} is extended by \textsc{Gabih} et al. \cite{gsw09} to cover the case of bounded Expected Loss.

Motivated by the Basel Committee proposals, it is a common practice in the financial industry to compute and re-evaluate risk constraints frequently using a time window (e.g. some days or weeks) over which the trading strategies are assumed to be held constant, cf. \textsc{Jorion} \cite{jor}. Therefore \textsc{Cuoco} et al. \cite{cuo} apply a risk constraint dynamically to control the risk exposure. The authors found that VaR and TCE constraints, when applied in a dynamic fashion, reduce the investment in the risky asset. \textsc{Putsch\"ogl} and \textsc{Sass}  \cite{PutschoeglSass2008} study the maximization of expected utility of terminal wealth under dynamic risk constraints in a  complete market model with partial information on the drift by  using  the  martingale method. The problem of intertemporal consumption subject to a dynamic VaR constraint is studied by \textsc{Yiu} \cite{yiu}. \textsc{Pirvu} \cite{pir} and \textsc{Leippold} et al. \cite{leippold2006} study an optimal consumption-investment problem with a dynamic VaR constraint imposed on the strategy. \textsc{Akume} \cite{aku} and \textsc{Akume} et al. \cite{aku0} consider a similar problem with a dynamic TCE constraint instead of a VaR. Their results indicate that imposing a dynamic risk constraint is a suitable method to reduce the risk of portfolios. For recent work on portfolio optimization under dynamic risk constraints we refer to \textsc{Moreno-Bromberg} et al. \cite{mpr}. The authors study an optimal investment problem for a general class of risk measures and the optimal trading strategy is characterized by a quadratic backward stochastic differential equation.

For the purpose of dynamic risk measurement, the usual assumption for a tractable way of calculating  risk measures is that the investment and consumption strategies are kept unchanged over a given time horizon. In the classical consumption-investment problem, this assumption is typically not fulfilled. An investor who follows the optimal strategy is continuously adjusting the asset holdings and the consumption rate. Thus, the risk of a portfolio is only approximately calculated.

Since continuous-time trading is impossible in practice and investors  want to avoid transaction costs  due to excessive trading we also consider a more realistic scenario where investors are only allowed to change the asset holdings and consumption rate at discrete points in time.
In \textsc{Rogers} \cite{rog} and \textsc{Bäuerle} et al. \cite{buv} such a discrete-time investor is called  a relaxed investor. It is shown that the losses due to  discretization are surprisingly small.  An advantage of the discrete-time trading is that dynamic risk measures can be computed directly without requiring constant strategies.

In this paper, we start with considering  a continuous-time investor facing a  classical con\-sump\-tion-in\-vest\-ment problem, but constrained to choose investment and consumption strategies that reduce a corresponding shortfall risk.
The investor's aim is to maximize the expected utility of the intermediate consumption and wealth at a terminal trading time $T > 0$. The portfolio risk over short time intervals is measured in terms of  VaR, TCE and EL. The risk measure is  dynamically re-evaluated using available conditioning information and imposed on the strategy as a risk constraint. We apply dynamic
programming techniques and combine the resulting Hamilton-Jacobi-Bellman equation with the method of Lagrange multipliers to derive optimal strategies under this constraint. An approximate
solution to the constrained portfolio problem is obtained by using a policy improvement algorithm. One advantage of our method  compared to \cite{mpr} is that the consumption of wealth can easily be included.

Then, we consider  a discrete-time investor who only changes portfolio and consumption choices at time points that are multiples of $\Delta >0$. Within the framework of a discrete-time investment, we consider a discretized standard Black-Scholes market with one risk-free and several risky assets. A discrete-time investor is, like a continuous-time investor, constrained to limit the risk exposure.  The optimization problem is solved by using the theory of Markov Decision Problems leading to a backward recursion algorithm.

Our numerical results indicate that the cost of the risk constraint in terms of the expected utility of intermediate consumption and terminal wealth is not too large while the risk can be controlled and considerably reduced. The consideration of both, discrete-time and continuous-time trading, allows us to perform numerical experiments for studying the  losses due to time-discretization.  Similar to Rogers \cite{rog} who investigates the Merton problem, we find that  for the portfolio problem under dynamic risk constraints the effects of time-discretization are also of small magnitude.

The paper is organized as follows. In Section \ref{sec2} we consider the continuous-time optimization problem. First, we describe the financial market model and introduce dynamic risk constraints afterwards. Then, we investigate the optimization problem under dynamic risk constraints. In Section \ref{sec3}, we consider the discrete-time optimization problem. The financial market of Section \ref{sec2} is discretized and the resulting dynamic risk constraints are described. We subsequently study the discrete-time optimization problem under dynamic risk constraints. Section \ref{sec4} provides some numerical results of the continuous-time and discrete-time optimization problem and both are compared. The Appendix contains proofs omitted from the main text.

\section{Continuous-time optimization}\label{sec2}
\subsection{Model}
We consider a continuous-time stochastic financial market with finite trading horizon $0<T<\infty$. The possible actions of an investor who is endowed with an initial capital $x_0 > 0$ are to invest in the financial market and/or to consume (parts of) the wealth. The investment opportunities are represented by one risk-free and $d$ risky securities. The price of the risk-free security at time $t$ is denoted by $S_t^0$ and the $d$-dimensional price process of the risky securities is denoted by $\Se=(\Se_t)_{t\in[0,T]}$. Uncertainty is modeled by a filtered probability space $(\Omega,\cF,\F,\P)$, where $\F =(\cF_t)_{t\in [0,T]}$ is the natural filtration generated by an $m$-dimensional Brownian motion $\We=(\We_t)_{t\in[0,T]}$, $m\ge d$, augmented by all the $\P$-null sets of $\Omega$.\\
The risk-free security (the ``bond'') behaves like a bank account earning a continuously compounded interest rate $r\ge0$, i.e., its price is given by
\begin{align*}
S_t^0 = e^{rt}.
\end{align*}
The remaining $d$ securities (the ``stocks'') are risky and evolve according to the following stochastic differential equations
\begin{align*}
\d S_t^i &= S_t^i\Big(\mu^i \d t + \sum_{j=1}^m \sigma^{ij}\d W_t^j\Big), \quad i=1,\ldots,d,\\
S_0^i &= s^i,
\end{align*}
where $\me=(\mu^i)_{i=1,\ldots,d} \in \R^d$ is the mean rate of return and $\se = (\sigma^{ij})^{j=1,\ldots,m}_{i=1,\ldots,d} \in \R^{d\times m}$ the matrix-valued volatility which is assumed to consist of linearly independent rows. Thus, we work on a standard Black-Scholes-Merton model. The bond and the stocks can be traded continuously and we allow to sell the stock and the bond short. An investment strategy $\pe=(\pe_t)_{t\in[0,T]}$ is an $\F$-adapted, real-valued, $d$-dimensional stochastic process satisfying
\begin{align*}
\E\Big[\int\nolimits_0^T \lVert\pe_t \rVert^2\d t\Big] < \infty,
\end{align*}
where $\lVert\cdot \rVert$ denotes the standard Euclidean norm in $\R^d$. 
The $d$ coordinates of $\pe_t$ represent the proportions of the current wealth invested in each of the $d$ stocks at time $t$.

A negative proportion $\pi_t^i$ corresponds to selling stock $i$ short. The proportion of wealth invested in the bond at time $t$ is given by $1-\sum_{i=1}^d\pi_t^i$. If this quantity is negative we sell the bond short, i.e., we are borrowing (at interest rate $r$). The investor is also allowed to withdraw funds for consumption. The consumption rate process (for brevity: consumption process) is denoted by $C=(C_t)_{t\in [0,T]}$. It is an adapted, $[0,\infty)$-valued stochastic process satisfying
\begin{align*}
\E\Big[\int\nolimits_0^T C_t \,\d t\Big] <\infty.
\end{align*}
Given an investment and consumption strategy the associated wealth process is well defined and satisfies the stochastic differential equation
\begin{align}\begin{split}\label{eq:wealth}
\d X_t &= X_t \Big(1-\sum_{i=1}^d\pi_t^i\Big)r \,\d t + X_t  \pe_t'\me\, \d t - C_t \d t  + X_t \pe_t'\se \d \We_t\\
&= X_t \left(r + \pe_t'(\me -\1 r)- c_t\right)\d t + X_t \pe_t'\se \d \We_t,
\end{split}
\end{align}
where $\pe_t'$ denotes the transpose of $\pe_t$, $\mathbf{1} := (1,\ldots,1)'$ is the $d$-dimensional vector with unit components and $c_t = C_t/X_t$ is the consumption rate relative to the wealth. Together with the initial condition $X_0 = x_0$, the wealth equation \eqref{eq:wealth} admits a unique strong solution given by
\begin{align}\label{eq:wealth_sol}
X_t = x_0 \exp\bigg\{\int_0^t \Big(r + \pe_s'(\me -\1 r)-c_s-\frac{1}{2}\lVert \pe_s'\se\rVert^2 \Big) \d s + \int_0^t \pe_s'\se \d \We_s \bigg\}.
\end{align}
Note that \eqref{eq:wealth_sol} implies
\begin{align}\label{eq:wealth_sol1}
X_{t+\Delta} = X_t \exp\bigg\{\!\!\int_t^{t+\Delta} \!\!\big(r + \pe_s'(\me -\1 r) \!-c_s\! -\frac{1}{2}\lVert \pe_s'\se\rVert^2 \big) \d s +\!\! \int_t^{t+\Delta} \!\pe_s'\se \d \We_s \!\bigg\}
\end{align}
for any $\Delta >0$. A pair $(\pe_t,c_t)_{t\in[0,T]}$
is called a portfolio-proportion process. We only consider those portfolio-proportion processes $\bu = (\pe,c)$ which achieve a positive wealth over the whole trading period $[0,T]$ and which are of Markov type, i.e., $\bu_t = \widetilde  u(t,X_t)$ for all $t\in[0,T]$ and some measurable function $\widetilde  u \colon [0,T]\times \R_+ \to \cK $ where $\cK:=\R^d \times [0,\infty)$. Such strategies are called admissible and the set of admissible strategies is denoted by $\cA_0$, thus
\begin{align*}
\cA_0 := \Big\{& (\bu_t)_{t\in[0,T]} \,\Big| \,\bu \text{ is } \F\text{-adapted}, \bu_t=(\pe_t,c_t) \in \cK, \bu_t = \widetilde  u(t,X_t),  \\
& X_t > 0\text{ for all }t\in[0,T]
\text{ and } \E\Big[\int\nolimits_0^T \big(\lVert\pe_t \rVert^2 + c_t\big)\d t\Big] < \infty\Big\}.
\end{align*}
We write in the following $X^{\bu}$ instead of $X$ to emphasize that the wealth is controlled by the portfolio-proportion process $\bu=(\pe,c)$.
\subsection{Dynamic risk constraints}
In this section we introduce how the risk of a given portfolio-proportion process can be quantified. In \textsc{Artzner} et al. \cite{art} risk is 
defined by the random future value of the portfolio wealth. In order to relate this definition of risk to the investor's loss we use the concept of benchmarks as in \cite{aku} and \cite{aku0}.\\
Given the current time $t \in[0,T]$ a benchmark $Y_t$ is prescribed and compared to the future portfolio value $X^{\bu}_{t+\Delta}$ at time $t+\Delta$, where $\Delta>0$ is the length of the risk measurement horizon $[t,t+\Delta]$. Then a shortfall is described by the random event $\{X^{\bu}_{t+\Delta} < Y_t \}$ and $L_t:=Y_t-X^{\bu}_{t+\Delta} $ is the corresponding investor's loss. The benchmark $Y_t$ is chosen as a function of time $t$ and wealth $X^{\bu}_t$ for all $t\in [0,T]$, i.e., $Y_t = \widetilde  f(t,X^{\bu}_t)$ for some measurable function $\widetilde  f\colon [0,T]\times \R_+\to [0,\infty)$. Typical benchmarks are presented in the following example.
\begin{example}\label{benchmark_examples}
	The benchmark may be chosen as
	\begin{itemize}
		\item a constant, i.e., $Y_t \equiv y$ for some $y\geq0$,
		\item a deterministic function, i.e., $Y_t = y(t)$,
		\item a fraction $p>0$ of the current wealth, i.e., $Y_t = p X^{\bu}_t$,
		\item the conditional expected wealth, i.e., $Y_t = \E[X^{\bu}_{t+\Delta}|\cF_t]$.
	\end{itemize}	
\end{example}
Next, we make precise how the risk of a given portfolio-proportion process is measured. Let
	\begin{align*}
	\cA :=\big\{(\bu_t)_{t\in[0,T]} \in \cA_0 \, |\,  
	\mathbb{E}[\vert L_t \vert]<\infty \text{ for all } t\in [0,T]\big\}
	\end{align*}
	and 
	\begin{align*}
	\mathcal{N}_t := \left\{L_t=Y_t - X^{\bu}_{t+\Delta}\big| \bu \in \cA \right\}.
	\end{align*}
Then the family $(\xi_t)_{t\in[0,T]}$ of maps $\xi_t$ with
\begin{align*}
\xi_t \colon \mathcal{N}_t  \to \mathcal{L}^1(\Omega,\cF_t,\P)
\end{align*}
is called a dynamic risk measure. The following dynamic risk measures are frequently used in the literature.
\begin{example}\label{ex:risk_measures}
	\ 
	\begin{itemize}
		\item  Given a time $t$ and a probability level $\alpha \in (0,1)$, the \textbf{Dynamic Value at Risk} denoted by $\VaR_t^{\alpha}$ is the loss over $[t,t+\Delta]$ that is exceeded only with the (small) conditional probability $\alpha$, thus
		\begin{align*}
		\xi_t(\Loss_t)=\VaR_t^\alpha(\Loss_t) := \inf\left\{l \in \R \,|\, \P(\Loss_t >l|\cF_t) \leq \alpha \right\}.
		\end{align*}
		\item Given a time $t$ and a probability level $\alpha \in (0,1)$, the \textbf{Dynamic Tail Conditional Expectation} denoted by $\TCE_t^\alpha$ is the conditional expected value of the loss exceeding $\VaR_t^\alpha$, thus
		\begin{align*}
		\xi_t(\Loss_t)=\TCE_t^\alpha(\Loss_t) := \E_t\left[\Loss_t|\Loss_t > \VaR_t^\alpha(\Loss_t)\right],
		\end{align*}
		where $\E_t[\cdot]$ denotes the conditional expectation given the information known up to time $t$.
		\item Given a time $t$, the \textbf{Dynamic Expected Loss} denoted by $\EL_t$ is the conditional expected value of ``positive'' losses, thus
		\begin{align*}
		\xi_t(\Loss_t)=\EL_t(\Loss_t) := \E_t\left[\Loss_t^+\right],
		\end{align*}
		where $x^+ = \max(x,0)$.
	\end{itemize}	
\end{example}
In the following we only consider those risk measures that can be written as $\xi_t(\Loss_t) = \widetilde  \xi(t,X^{\bu}_t,\pe_t,c_t)$ for all $t\in [0,T]$ and some measurable function $\widetilde  \xi \colon [0,T ]\times\R_+\times \cK \to \R$. The risk measures presented in Example \ref{ex:risk_measures} belong to this class of risk measures. This becomes obvious if we recall that loss is defined by $\Loss_t = Y_t - X^{\bu}_{t+\Delta}$ and that the benchmark is of the form $Y_t=\widetilde f(t,X^{\bu}_t)$. Thus, we have to know the conditional distribution of $X^{\bu}_{t+\Delta }$ given $X^{\bu}_{t}$ at any time $t$ to explicitly compute the risk measures above. From Equation \eqref{eq:wealth_sol} and Equation \eqref{eq:wealth_sol1} it is easily seen that the distribution of the investor's wealth at a future date depends on the portfolio-proportion process $\bu=(\pe,c)$. For the purposes of risk measurement, it is common practice to approximate this distribution (for elaborations see \cite{cuo}). Let us consider the random variable
\begin{align*}
\mathcal{X}=\mathcal{X}(x,\bar \pe, \bar c)
=x \exp\Big\{&\Big(r + \bar\pe'(\me -\1 r)-\bar c-\frac{\lVert \bar \pe'\se\rVert^2}{2} \Big) \Delta \\
&+  \bar \pe'\se \left(\We_{t+\Delta}-\We_t\right) \Big\},
\end{align*}
where $\Delta >0$, $x>0$ and $(\bar \pe,\bar c) \in \cK$ are given. It is easily seen that $\mathcal{X}$ is log-normally distributed. More precisely, the law of $\mathcal{X}(x,\bar \pe, \bar c)$ is the one of $xe^Z$, where $Z$ is a normally distributed random variable with mean $\left(r + \bar\pe'(\me -\1 r)-\bar c-\frac{1}{2}\lVert \bar \pe'\se\rVert^2\right)\Delta$ and variance $\lVert \bar \pe'\se\rVert^2 \Delta$. We immediately obtain from Equation \eqref{eq:wealth_sol1} that, given a portfolio-proportion process $\bu=(\pe,c)$ and the associated portfolio wealth $X^{\bu}_t$ at time $t$, the random variable $\mathcal{X}(X^{\bu}_t,\pe_t,c_t)$ is  the value of the portfolio wealth at time $t+\Delta$, if the portfolio-proportion process is kept constant at $(\bar\pe,\bar c)=(\pe_t,c_t)$ between time $t$ and $t+\Delta$. Then, $X^{\bu}_{t+\Delta}$ is - conditionally on $\cF_t$ -  distributed as $\mathcal{X}(X^{\bu}_t,\pe_t,c_t)$. Using this approximation we obtain the following formulas for the risk measures introduced in the examples above.
\begin{lemma}\label{lem:risk_measures}\leavevmode
	\begin{enumerate}
		\item The Dynamic Value at Risk at time $t$ can be written as $	\VaR_t^\alpha\left(\Loss_{t}\right)=\widetilde \xi(t,X_t^{\bu},\pe_t,c_t)$, where
		\begin{align*}
		\widetilde  \xi(t,x,\bar \pe,\bar c)= \widetilde  f(t,x) - x\exp\Big[&\Big(\bar\pe(\me-\1 r)+r-\bar c-\frac{\lVert\bar\pe\se\rVert^2}{2}\Big)\Delta \\	&+\Phi^{-1}(\alpha)\lVert\bar\pe\se\rVert \sqrt{\Delta }\Big].
		\end{align*}
		Here $\Phi(\cdot)$ and $\Phi^{-1}(\cdot)$ denote the normal distribution and the inverse distribution
		functions, respectively.
		\item The Dynamic Tail Conditional Expectation at time $t$ can be written as $\TCE_t^\alpha\left(\Loss_{t}\right) = \widetilde \xi(t,X_t^{\bu},\pe_t,c_t)$, where
		\begin{align*}
		\widetilde  \xi(t,x,\bar \pe,\bar c)=  \widetilde  f(t,x) - \frac{x}{\alpha}\big[&\exp\left\{\left(\bar\pe(\me-\1 r)+r-\bar c\right)\Delta \right\}\\
		&\Phi\big(\Phi^{-1}(\alpha)-\lVert\bar\pe\se\rVert\sqrt{\Delta}\big)\big].
		\end{align*}
		\item The Dynamic Expected Loss at time $t$ can be written as $\EL_t\left(\Loss_{t}\right) =\widetilde \xi(t,X_t^{\bu},\pe_t,c_t)$, where
		\begin{align*}
		&\widetilde  \xi(t,x,\bar \pe,\bar c)=\widetilde  f(t,x)\Phi(d_1)-x\exp{\left\{\left(\bar\pe(\me-\1 r)+r-\bar c\right)\Delta \right\}}\Phi(d_2)
		\end{align*}
		and
		\begin{align*}
		d_{1/2} = \frac{1}{\lVert\bar\pe\se\rVert\sqrt{\Delta}}\bigg[ \ln\bigg(\frac{\widetilde  f(t,x)}{x}\bigg)-\Big(\bar\pe(\me-\1 r)+r-\bar c\mp\frac{\lVert\bar\pe\se\rVert^2}{2}\Big)\Delta \bigg].
		\end{align*}
	\end{enumerate}
\end{lemma}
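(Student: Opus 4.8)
The plan is to substitute the log-normal approximation recalled above, i.e.\ that conditionally on $\cF_t$ the variable $X^{\bu}_{t+\Delta}$ is distributed as $x e^{Z}$ with $x=X^{\bu}_t$ and $Z\sim\mathcal N(m,v^2)$, where $m=\big(r+\bar\pe'(\me-\1 r)-\bar c-\frac12\lVert\bar\pe'\se\rVert^2\big)\Delta$ and $v=\lVert\bar\pe'\se\rVert\sqrt\Delta$, into each of the three risk functionals, and then to reduce everything to two elementary facts about the normal law: $\P(Z<k)=\Phi\big((k-m)/v\big)$ and the partial exponential moment identity
\[
\E\big[e^{Z}\1_{\{Z<k\}}\big]=e^{m+v^2/2}\,\Phi\!\Big(\frac{k-m-v^2}{v}\Big),
\]
the latter obtained by completing the square in the Gaussian density (equivalently, by a change of measure). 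Writing the loss as $\Loss_t=\widetilde f(t,x)-x e^{Z}$, one also records the identity $m+v^2/2=\big(r+\bar\pe'(\me-\1 r)-\bar c\big)\Delta$, which produces the common exponential factor appearing in all three formulas.

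For the first claim, $\P(\Loss_t>l\mid\cF_t)=\P\big(x e^{Z}<\widetilde f(t,x)-l\big)$, which for $l<\widetilde f(t,x)$ equals $\Phi\big(\big(\ln\frac{\widetilde f(t,x)-l}{x}-m\big)/v\big)$ and is strictly decreasing in $l$; hence the infimum defining $\VaR_t^\alpha$ is attained at the unique $l$ for which this probability equals $\alpha$. Solving $\ln\frac{\widetilde f(t,x)-l}{x}=m+v\,\Phi^{-1}(\alpha)$ for $l$ and inserting $m$ and $v$ gives the asserted expression.

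For the second claim, set $V:=\VaR_t^\alpha(\Loss_t)$; by the first claim the shortfall event $\{\Loss_t>V\}$ coincides with $\{Z<m+v\,\Phi^{-1}(\alpha)\}$, which has conditional probability $\alpha$, so $\TCE_t^\alpha(\Loss_t)=\frac1\alpha\big(\widetilde f(t,x)\,\alpha-x\,\E_t[e^{Z}\1_{\{Z<m+v\Phi^{-1}(\alpha)\}}]\big)$; applying the partial moment identity with $k=m+v\,\Phi^{-1}(\alpha)$ gives $(k-m-v^2)/v=\Phi^{-1}(\alpha)-\lVert\bar\pe'\se\rVert\sqrt\Delta$, and division by $\alpha$ yields the claimed formula. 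For the third claim, $\EL_t(\Loss_t)=\E_t\big[(\widetilde f(t,x)-x e^{Z})\1_{\{Z<k\}}\big]$ with $k=\ln(\widetilde f(t,x)/x)$; the first term equals $\widetilde f(t,x)\,\Phi\big((k-m)/v\big)$ and the second equals $x e^{m+v^2/2}\Phi\big((k-m-v^2)/v\big)$, and a direct computation shows $(k-m)/v=d_1$ and $(k-m-v^2)/v=d_1-\lVert\bar\pe'\se\rVert\sqrt\Delta=d_2$, in agreement with the definition of $d_{1/2}$.

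These manipulations are entirely routine and there is no real obstacle; the only point requiring a word of care is the degenerate case $\lVert\bar\pe'\se\rVert=0$ (a pure bond position), where $v=0$ and $X^{\bu}_{t+\Delta}$ is deterministic. There one reads the formulas as the obvious limits as $v\downarrow0$: $\VaR_t^\alpha(\Loss_t)$ and $\TCE_t^\alpha(\Loss_t)$ collapse to $\widetilde f(t,x)-x e^{(r-\bar c)\Delta}$, while $d_{1/2}\to\pm\infty$ according to the sign of $\ln(\widetilde f(t,x)/x)-(r-\bar c)\Delta$, so that $\EL_t(\Loss_t)$ collapses to $\big(\widetilde f(t,x)-x e^{(r-\bar c)\Delta}\big)^+$. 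The same non-degeneracy $v>0$ is also what ensures that $\Loss_t$ has a continuous conditional distribution, which is used in the second claim to identify the shortfall probability with $\alpha$ exactly.
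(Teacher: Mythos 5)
Your proposal is correct and, for the Expected Loss (the only part the paper proves itself, in Appendix~\ref{appA}), it follows essentially the same route: reduce to a Gaussian probability plus a partial exponential moment $\E[e^{Z}\1_{\{Z<k\}}]=e^{m+v^2/2}\Phi((k-m-v^2)/v)$ obtained by completing the square, with only a cosmetic difference in where the factor $x$ is absorbed into the mean of $Z$. Your derivations of the VaR and TCE formulas, which the paper delegates to the cited thesis of Akume, are likewise correct and mirror the computations the paper carries out for the discrete-time analogues in Appendix~\ref{appB}; the remarks on monotonicity of $l\mapsto\P(L_t>l\mid\cF_t)$ and on the degenerate case $\lVert\bar\pe'\se\rVert=0$ are welcome additions not addressed in the paper.
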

\begin{proof}
	The first and the second claim are proved in \cite[pp. 144-145]{aku} and \cite[pp. 145-147]{aku}, respectively. The proof of the third claim is presented  in Appendix \ref{appA}.
\end{proof}
A risk constraint is imposed on the portfolio-proportion process by requiring that $\bu_t = (\pe_t,c_t)$ takes values in the set $\cK^R(t,X_t^{\bu})$ at any time $t$, where $\cK^R(t,X_t^{\bu})$ is defined by
\begin{align*}
\cK^R(t,x)=\big\{(\bar\pe,\bar c) \in \R^d \times [0,\infty) \,\,\big| \,\,\widetilde  \xi(t,x,\bar \pe,\bar c) \leq \widetilde  \varepsilon(t,x)\big\}, \qquad x>0.
\end{align*}
Here, $\widetilde  \varepsilon \colon [0,T]\times \R_+ \to [0,\infty)$ is a measurable function which represents the bound on the risk constraint and may depend on time and wealth. Then the set of admissible strategies which continuously satisfy the imposed risk constraint reads as
\begin{align*}
\cA^R :=\big\{(\bu_t)_{t\in[0,T]} \in \cA \, |\,  \bu_t \in \cK^R(t,X_t^{\bu}) \text{ for all } t\in [0,T]\big\}.
\end{align*}
We assume that the benchmark $Y_t=\widetilde f(t,X^{\bu}_t)$ and the bound $\widetilde  \varepsilon(t,X^{\bu}_t)$ in the definition of the risk constraint are chosen such that $\cA^R \neq \varnothing$ is satisfied, i.e., there exist admissible strategies. Such problems are examined in \cite{gsw09}, where the shortfall risk is measured in terms of the expected loss and applied in a static manner.
\subsection{Optimization under risk constraints}\label{sec2_3}
Given a finite trading horizon $T$ we consider the problem of an investor who starts with a positive endowment $X_0=x_0$. The investor derives utility from intermediate consumption and from terminal wealth while a risk constraint is imposed that has to be satisfied. The investor's performance criterion is the expected value of the utility of intertemporal consumption and terminal wealth. Thus, the objective is to maximize
\begin{align*}
\E_{0,x_0}\Big[\int\nolimits_0^T U_1(C_s)\d s + U_2\left(X^{\bu}_T\right)\Big]
\end{align*}
over all portfolio-proportion processes $\bu =(\pe,c) \in \cA^R$. Note that $C_s=c_sX^{\bu}_s$. Here $U_1, U_2 \colon [0,\infty)\to\R \cup \{-\infty\}$  denote (time-independent) utility functions (i.e., $U_1$ and $U_2$ are strictly increasing, strictly concave and twice continuously differentiable on $(0,\infty)$). The term $\E_{t,x}[\cdot]$ denotes the conditional expectation given the information known up to time $t$ and $X_t = x$.
	\begin{remark}	
		We consider time-independent utility functions for the sake of notational simplicity but time-dependent utility functions can be treated in the same way. A typical example of a time-dependent utility function $U \colon [0,T]\times [0,\infty) \to\R\cup \{-\infty\}$ is 
		\begin{align*}
		U(t,x)=e^{-\rho t} U_1(x), \qquad \rho > 0.
		\end{align*}	
		Thus, discounting might be included.
	\end{remark}
We tackle the optimization problem by using dynamic programming techniques and embed it into a family of optimization problems. Given a portfolio-proportion process $\bu=(\bu_t)_{t\in[0,T]} \in \cA^R$ we define the reward function $\cJ(t,x,\bu)$ for all $(t,x) \in [0,T] \times \R_+$ by
\begin{align*}
\cJ(t,x,\bu)= \E_{t,x}\Big[\int\nolimits_t^T U_1(c_sX^{\bu}_s)\d s + U_2(X_T^{\bu})\Big].
\end{align*}
The objective is to maximize the reward function over all admissible processes which continuously satisfy the imposed risk constraint. We define the value function for all $(t,x) \in [0,T] \times \R_+$ by
\begin{align}\label{SCP}
V(t,x)=\sup_{\bu\in \cA^R}\cJ(t,x,\bu).
\end{align}
A portfolio-proportion process $\bu^{\ast}$ with $V(t,x) = \cJ(t,x,\bu^{\ast})$ is called optimal.
The Hamilton-Jacobi-Bellman (HJB) equation for $V$ is derived by applying the dynamic programming principle, cf. \textsc{Pham} \cite[Section 3.3, Theorem 3.3.1]{pha}, which yields
\begin{align}\label{HJB}
\frac{\partial}{\partial t}V(t,x) +\sup_{\bar \bu=(\bar\pe,\bar c) \in \cK^R(t,x)} \left\{U_1(\bar c x)+\cH^{\bar \bu} V(t,x)\right\}=0 
\end{align}
for $(t,x) \in [0,T)\times \R_+$ with terminal condition $V(T,x) =  U_2(x)$ for $x \in \R_+$. Here, the operator $\cH^{\bar \bu}$ acting on $\cC^{1,2}([0,T]\times \R_+)$ is defined by
\begin{align*}
\cH^{\bar \bu}= \left[x \left(\bar{\pe}'(\me-\1 r)+r-\bar c\right)\right]\frac{\partial }{\partial x}+\frac{1}{2}x^2 \bar\pe'\se \se'\bar\pe \frac{\partial^2 }{\partial x^2}.
\end{align*}
Note that $\cH^{\bar \bu}$ is the generator of the controlled wealth process when the portfolio-proportion process takes the value $\bar \bu =(\bar\pe,\bar c)$. Unlike many stochastic control problems, our formulation poses a state-dependent set of admissible controls. This difficulty can be handled by embedding the state-dependent set  $\cK^R(t,X_t^{\bu})$ into a compact set $\bar \cK$, cf. \textsc{Shardin} and \textsc{Wunderlich} \cite{sha}.
Then, using the techniques presented in \textsc{Fleming} and \textsc{Soner} \cite[Chapter III, Theorem 8.1]{fle} one can show that if there exists a classical solution $\widetilde V$ of the HJB equation \eqref{HJB} then $\widetilde V$ coincides with the value function $V$ of the control problem \eqref{SCP}. Furthermore, if there exists a measurable function $\widetilde  u^{\ast} \colon [0, T) \times \R_+ \to \cK$ satisfying  $\widetilde  u^{\ast}(t,x) \in \cK^R(t,x)$ for all $(t,x) \in [0,T)\times \R_+$ and such that for every $(t,x) \in [0,T)\times \R_+$ the value of $\widetilde  u^{\ast}$ at $(t,x)$ is the unique maximizer of the problem
\begin{align*}
\sup_{\bar \bu=(\bar\pe,\bar c) \in \cK^R(t,x)} \left\{U_1(\bar c x)+\cH^{\bar \bu} V(t,x)\right\}
\end{align*}
then one can show that the optimal portfolio-proportion process is given by $\bu^{\ast}_t = \widetilde  u^{\ast}(t,X^{u^{\ast}}_t)$ and satisfies $\bu^{\ast} \in \cA^R$.\\\\
For the case of power utility, i.e.,
	\begin{align*}
	U_1(x)=U_2(x)=\frac{x^{1-\gamma}}{1-\gamma}, \quad 0<\gamma <1, 
	\end{align*}
	there is a closed-form expression for the optimal portfolio-proportion process when no risk constraint is imposed, see \textsc{Korn} \cite{kor}. We call this process the Merton portfolio-proportion process and it is given by
	\begin{align}\label{Merton_strategy}
	\boldsymbol{\pi}^{M}(t,x) &=\boldsymbol{\pi}^{M}= (\sigma \sigma')^{-1}(\me-\1 r)\frac{1}{\gamma}\quad \text{and} \\
	c^{M}(t,x)&=c^{M}(t)=\big({\tau}^{-1}+(1-{\tau}^{-1})e^{-{\tau}(T-t)}\big)^{-1}, 	
	\end{align}
	where
	\begin{align*}
	{\tau} = \Big[-(1-\gamma) \Big(\frac{(\me-\1 r)'(\sigma \sigma')^{-1}(\me-\1 r)}{2\gamma}+r\Big)\Big]\frac{1}{\gamma}.
	\end{align*}
	An investor using the Merton portfolio-proportion process is called a Merton investor and the associated value function is denoted by $V^M(t,x)$. We write $X^{M}$ when the wealth is controlled by the Merton portfolio-proportion process $\bu^M:=(\pe^M,c^M)$.

\section{Discrete-time optimization}\label{sec3}
\subsection{Model}
Here we suppose that the trading interval $[0,T]$ is divided into $N$ periods of length $\Delta$ and trading only takes place at the beginning of each of the $N$ periods. The trading times are denoted by $t_n := n \Delta$, $n=0,\ldots,N-1$, and for the time horizon $T$ we write  $t_N:= T= N \Delta$.
Uncertainty is modeled by the filtered probability space $(\Omega,\cG,\mathbb{G},\P)$ where the filtration $\mathbb{G}=(\cG_{t_n})_{n=0,\ldots,N}$ is generated by the (discretized) Brownian motion $(W_{t_n})_{n=0,\ldots,N}$ and augmented by all the $\P$-null sets of $\Omega$. In what follows we will consider an $N$-period financial market which results from a discretization of the Black-Scholes-Merton model considered in the continuous-time case and consists of one risk-free and $d$ risky securities. Thus, the price process of the bond is given by $S^0_{t_0} \equiv 1$ and
\begin{align*}
S^0_{t_{n+1}} = S^0_{t_n} e^{r \Delta}, \quad n=0,\ldots,N-1.
\end{align*}
Recall, $r \ge 0$ denotes the continuously compounded interest rate. The price process of each of the $d$ risky securities is given by $S_{t_0}^k=s^k$ with $k=1,\ldots,d$ and
\begin{align*}
S_{t_{n+1}}^k = S_{t_n}^k \widetilde R_{t_{n+1}}^k, \quad k=1,\ldots,d, \quad n=0,\ldots,N-1,
\end{align*}
where $\widetilde R^k_{t_{n+1}}$ is defined by
\begin{align*}
\widetilde R^k_{t_{n+1}} = \exp\Big\{\Big(\mu^k - \frac{1}{2}\sum_{j=1}^m \left(\sigma^{kj}\right)^2\Big)\Delta + \sum_{j=1}^m \sigma^{kj}\left(W^j_{t_{n+1}}-W^j_{t_n}\right)\Big\}
\end{align*}
for $k=1,\ldots,d$. The random variables $\widetilde R^k_{t_{n+1}}$ are log-normally distributed and represent the relative price change of the risky securities in the time interval $[t_n,t_{n+1})$.\\ 
Analogously to the continuous-time case the investor starts with an initial wealth $x_0 >0$ and is allowed to invest in the financial market and to consume the wealth. In contrast to the continuous-time case the investor can only adjust at the beginning of each of the $N$ periods the amount of wealth invested into the financial market and the amount of wealth consumed. The amount invested in the $d$ risky securities is denoted by $\phe = (\phe_{t_n})_{n\in \{0,\ldots,N-1\}}$ and the amount which is consumed by $\eta= (\eta_{t_n})_{n\in \{0,\ldots,N-1\}}$. Given an investment-consumption strategy $\bn=(\phe,\eta)$ the associated wealth process evolves as follows
\begin{align*}
X_{t_{n+1}} &= e^{r\Delta}(X_{t_n}-\eta_{t_n}-\phe_{t_n}'\1)+\phe_{t_n}'\cdot \widetilde \Rb_{t_{n+1}}\\
&= e^{r\Delta}(X_{t_n}-\eta_{t_n}+\phe_{t_n}'\cdot \Rb_{t_{n+1}}),
\end{align*}
where $\Rb_{t_{n+1}} = (R^1_{t_{n+1}},\ldots,R^d_{t_{n+1}})$ denotes the relative discounted return process and is defined by
\begin{align*}
R^k_{t_{n+1}}=e^{-r\Delta}{\widetilde R^k_{t_{n+1}}}-1 \quad k=1,\ldots,d.
\end{align*}
The investment strategy $(\phe_{t_n})$ and the consumption strategy $(\eta_{t_n})$ are assumed to be $\mathbb{G}$-adapted. Moreover, we restrict to strategies which achieve a positive wealth for all $N$ periods. Thus, in contrast to the continuous-time case, for an investor who only trades at discrete points in time it is not admissible to sell stocks short or to take out a loan to buy stocks because in a time interval of length $\Delta$ the stock price could move unfavorably for the investor, and the wealth would become negative. Therefore, at any trading time $t_n$, $n\in \{0,\ldots,N-1\}$, the investor must choose the amount $\phe_{t_n}$ of current wealth $X_{t_n}$ that is invested in the stocks and the amount $\eta_{t_n}$ of current wealth $X_{t_n}$ that is consumed in such a way that $\bn_{t_n}=( \phe_{t_n},  \eta_{t_n}) \in \cK_D (X_{t_n})$ holds, where 
\begin{align*}
\cK_D (x)= \left\{\bar\bn=(\bar \phe,\bar \eta) \, | \, 0\leq \bar \eta \leq x \text{ and } 0\leq \bar \phe' \1 \leq x-\bar \eta\right\},\qquad x>0.
\end{align*}
Furthermore, we only consider strategies which are of Markov type, i.e., $\bn_{t_n} = \widetilde  \nu(t_n,X_{t_n})$ for all $n=0,\ldots,N-1$ and some measurable function $\widetilde  \nu \colon [0,T]\times\R_+ \to [0,\infty)^{d+1}$.
Such strategies are called admissible and the set of admissible strategies is denoted by $\cA^0_{\Delta}$, thus
\begin{align*}
\cA^0_{\Delta}:= \big\{(\bn_{t_n})_{n\in\{0,\ldots,N-1\}} \,\Big| \,\bn \text{ is } \mathbb{G}\text{-adapted}, \bn_{t_n}=(\phe_{t_n},\eta_{t_n}) \in \cK_D (X_{t_n}),\\
\bn_{t_n} = \widetilde  \nu(t_n,X_{t_n}) \text{ and } X_{t_n} > 0 \text{ for all }n\in\{0,\ldots,N-1\} \big\}.
\end{align*}
It is shown in \textsc{Hinderer} \cite[Theorem 18.4]{hin} that  more general strategies which depend on the complete history of the process instead of being Markovian ones, do not increase the value of the maximization problem that we will consider later on. We write in the following $X^{\bn}$ instead of $X$ to emphasize that the wealth is controlled by the investment-consumption strategy $\bn=(\phe,\eta)$.
\subsection{Dynamic risk constraints}
Analogously to the continuous-time case the loss over the period $[t_{n},t_{n+1})$ is defined by $\Loss_{t_n} := Y_{t_n}-X_{t_{n+1}}^{\bn}$ with $Y_{t_n}$ being a $\cG_{t_n}$-measurable benchmark (see Example \ref{benchmark_examples}) prescribed at time $t_n$. Let $\mathcal{N}^{\Delta}_{t_n} := \left\{{L_{t_n}}\big| \bn \in \cA_{\Delta} \right\}$, where
	\begin{align*}
	\cA_{\Delta} :=\big\{(\bn_{t_n})_{n\in\{0,\ldots,N-1\}} \in \cA^0_{\Delta} \, |\,  
	\mathbb{E}[\vert L_{t_n} \vert]<\infty \text{ for all } n\in\{0,\ldots,N-1\}\big\}.
	\end{align*}
A dynamic risk measure (in discrete time) $(\psi_{t_n})_{n\in\{0,\ldots,N-1\}}$ is a family of maps $\psi_{t_n}$ with
\begin{align*}
\psi_{t_n} \colon \mathcal{N}^{\Delta}_{t_n}  \to \mathcal{L}^1(\Omega,\cG_{t_n},\P).
\end{align*}
We restrict to risk measures of the form $\psi_{t_n}(\Loss_{t_n}) = \widetilde  \psi(t_n,X^{\bn}_{t_n},\phe_{t_n},\eta_{t_n})$ for all $n \in \{0,\ldots,N-1\}$ and some measurable function $\widetilde  \psi \colon [0,T]\times\R_+\times [0,\infty)^{d+1} \to \R$.
Lemma \ref{lem:risk_measures_discrete} below shows how the dynamic risk measures VaR, TCE and EL can be computed explicitly in the discrete-time case if we consider  a market with a single stock. Here we can benefit from the fact that the wealth $X^{\bn}_{t_{n+1}}$ at time $t_{n+1}$ is - conditionally on $\cG_{t_n}$ - (shifted) log-normally distributed. In the case $d>1$ the wealth $X^{\bn}_{t_{n+1}}$ at time $t_{n+1}$ given $\cG_{t_n}$ is a sum of dependent log-normally distributed random variables and closed-form expressions for these risk measures are not available.
\begin{lemma}\label{lem:risk_measures_discrete}\leavevmode
	\begin{enumerate}
		\item Given a probability level $\alpha \in (0,1)$ the Dynamic Value at Risk at time $t_n$, $n=0,\ldots,N-1$ can be written as $\VaR^{\alpha}_{t_n}(\Loss_{t_n}) =\widetilde \psi(t_n,X_{t_n}^{\bn},\varphi_{t_n},\eta_{t_n})$, where
		\begin{align*}
		\widetilde  \psi(t,x,\bar\varphi,\bar \eta)
		= {\widetilde  f(t,x)}\!- e^{r\Delta} (x-\bar\eta-\bar\varphi) \!-\exp\big\{\Phi^{-1}(\alpha)\sigma\sqrt{\Delta}  +\big(\mu\!-\!\frac{\sigma^2}{2}\big)\Delta\big\}\bar\varphi.
		\end{align*}
		Recall, $\Phi(\cdot)$ and $\Phi^{-1}(\cdot)$ denote the normal distribution and the inverse distribution functions respectively.
		\item Given a probability level $\alpha \in (0,1)$ the Dynamic Tail Conditional Expectation at time $t_n$, $n=0,\ldots,N-1$ can be written as $\TCE^{\alpha}_{t_n}(\Loss_{t_n}) =\widetilde \psi(t_n,X_{t_n}^{\bn},\varphi_{t_n},\eta_{t_n})$, where
		\begin{align*}
		\widetilde  \psi(t,x,\bar\varphi,\bar \eta)= {\widetilde  f(t,x)} - e^{r\Delta} (x-\bar\eta-\bar\varphi)-\frac{1}{\alpha}e^{ \mu\Delta}\Phi\left(\Phi^{-1}(\alpha)-\sigma\sqrt{\Delta}\right)\bar\varphi.
		\end{align*}
		\item The Dynamic Expected Loss at time $t_n$, $n=0,\ldots,N-1$ can be written as $\EL_{t_n}(\Loss_{t_n}) = \widetilde \psi(t_n,X_{t_n}^{\bn},\varphi_{t_n},\eta_{t_n})$, where
		\begin{align*}
		\widetilde  \psi(t,x,\bar\varphi,\bar \eta) &= \left({\widetilde  f(t,x)} - e^{r\Delta} (x-\bar\eta-\bar\varphi)\right)\Phi(d_1)-e^{ \mu\Delta}\Phi(d_2)\bar\varphi,\\
		d_{1/2} &= \frac{1}{\sigma\sqrt{\Delta}}\Big[ \ln\Big(\frac{\widetilde  f(t,x)- e^{r\Delta} (x-\bar\eta-\bar\varphi)}{\bar\varphi}\Big)-\big(\mu\mp\sigma^2/2\big)\Delta \Big].
		\end{align*}
	\end{enumerate}
\end{lemma}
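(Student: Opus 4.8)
The plan is to reduce all three formulas to the conditional law of the one‑period wealth and then read off the risk measures from the resulting (shifted) lognormal distribution, exactly as for the continuous‑time Lemma~\ref{lem:risk_measures}. Fix $n\in\{0,\dots,N-1\}$, condition on $\cG_{t_n}$, and abbreviate $x=X_{t_n}^{\bn}$, $\bar\varphi=\varphi_{t_n}$, $\bar\eta=\eta_{t_n}$. For $d=1$ the wealth recursion of Section~\ref{sec3} reads
\begin{align*}
X_{t_{n+1}}^{\bn} &= e^{r\Delta}(x-\bar\eta-\bar\varphi)+\bar\varphi\,\widetilde R_{t_{n+1}},\\
\widetilde R_{t_{n+1}} &= \exp\big\{\big(\mu-\tfrac{\sigma^2}{2}\big)\Delta+\sigma\sqrt{\Delta}\,\zeta\big\},
\end{align*}
where $\zeta:=(W_{t_{n+1}}-W_{t_n})/\sqrt{\Delta}$ is $N(0,1)$‑distributed and independent of $\cG_{t_n}$. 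Consequently, conditionally on $\cG_{t_n}$,
\begin{align*}
\Loss_{t_n}=Y_{t_n}-X_{t_{n+1}}^{\bn}=A-\bar\varphi\,e^{(\mu-\sigma^2/2)\Delta+\sigma\sqrt{\Delta}\,\zeta},
\end{align*}
where $A:=\widetilde f(t_n,x)-e^{r\Delta}(x-\bar\eta-\bar\varphi)$. Every right‑hand side appearing in the lemma is a deterministic function of $(t_n,x,\bar\varphi,\bar\eta)$, so the asserted $\cG_{t_n}$‑measurability is automatic; I would carry out the computation assuming $\bar\varphi>0$, the degenerate value $\bar\varphi=0$ (which makes $X_{t_{n+1}}^{\bn}$ itself $\cG_{t_n}$‑measurable) being recovered by continuity.

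For the Value at Risk, I would use that $\Loss_{t_n}$ takes values in $(-\infty,A)$ with a continuous, strictly increasing conditional distribution function there, so that $l\mapsto\P(\Loss_{t_n}>l\mid\cG_{t_n})$ is continuous and strictly decreasing on $(-\infty,A)$; hence $\VaR^{\alpha}_{t_n}(\Loss_{t_n})$ is the unique $l$ with $\P(\Loss_{t_n}>l\mid\cG_{t_n})=\alpha$. Rewriting, for $l<A$,
\begin{align*}
\{\Loss_{t_n}>l\}=\Big\{\zeta<\frac{\ln\!\big((A-l)/\bar\varphi\big)-(\mu-\sigma^2/2)\Delta}{\sigma\sqrt{\Delta}}\Big\},
\end{align*}
this equation becomes $\Phi(\,\cdot\,)=\alpha$; solving for $l$ and inserting $A$ gives claim~1.

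For the Tail Conditional Expectation and the Expected Loss the only additional ingredient is the Gaussian identity $\E\big[e^{a\zeta}\1_{\{\zeta<b\}}\big]=e^{a^2/2}\Phi(b-a)$ for $\zeta\sim N(0,1)$, obtained by completing the square in the exponent --- this is the source of the factor $e^{\mu\Delta}=e^{(\mu-\sigma^2/2)\Delta}e^{\sigma^2\Delta/2}$ in the formulas. For the TCE, I would first check that $\{\Loss_{t_n}>\VaR^{\alpha}_{t_n}(\Loss_{t_n})\}=\{\zeta<\Phi^{-1}(\alpha)\}$, an event of conditional probability exactly $\alpha$, and then compute
\begin{align*}
\alpha\,\TCE^{\alpha}_{t_n}(\Loss_{t_n})
&= \E\big[\Loss_{t_n}\1_{\{\zeta<\Phi^{-1}(\alpha)\}}\big]\\
&= A\alpha-\bar\varphi\,e^{(\mu-\sigma^2/2)\Delta}\,\E\big[e^{\sigma\sqrt{\Delta}\,\zeta}\1_{\{\zeta<\Phi^{-1}(\alpha)\}}\big],
\end{align*}
so that the identity with $a=\sigma\sqrt{\Delta}$, $b=\Phi^{-1}(\alpha)$ gives claim~2. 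For the EL, $\Loss_{t_n}>0$ holds iff $\zeta<d_1$ with $d_1$ as in the statement, whence
\begin{align*}
\EL_{t_n}(\Loss_{t_n})
&= \E\big[\Loss_{t_n}\1_{\{\zeta<d_1\}}\big]\\
&= A\,\Phi(d_1)-\bar\varphi\,e^{(\mu-\sigma^2/2)\Delta}\,\E\big[e^{\sigma\sqrt{\Delta}\,\zeta}\1_{\{\zeta<d_1\}}\big],
\end{align*}
and the same identity, together with the algebraic identity $d_1-\sigma\sqrt{\Delta}=d_2$, yields claim~3.

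I do not expect a genuine obstacle here: the whole argument is careful bookkeeping for a single shifted‑lognormal variable, parallel to the continuous‑time Lemma~\ref{lem:risk_measures}. The two points that do require care are (i) checking that the conditioning event in the TCE is \emph{exactly} $\{\zeta<\Phi^{-1}(\alpha)\}$, so that it has conditional probability $\alpha$ and no atom issues arise, and (ii) the boundary situations $\bar\varphi=0$ (deterministic one‑period wealth) and $A\le0$ (loss a.s.\ non‑positive), in which the formulas are to be read as the corresponding continuous limits, with $\Phi(d_1),\Phi(d_2)\to0$ and the terms $\ln(\cdot/\bar\varphi)$ interpreted accordingly.
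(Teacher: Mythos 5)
Your proposal is correct and follows essentially the same route as the paper's Appendix~\ref{appB}: condition on $\cG_{t_n}$, use the shifted-lognormal law of $X_{t_{n+1}}^{\bn}$, invert the Gaussian distribution function for the VaR, and evaluate the TCE and EL via the complete-the-square identity $\E[e^{a\zeta}\1_{\{\zeta<b\}}]=e^{a^2/2}\Phi(b-a)$ together with $d_2=d_1-\sigma\sqrt{\Delta}$. Your explicit attention to the boundary cases $\bar\varphi=0$ and $A\le 0$ is a small refinement not present in the paper, but it does not change the argument.
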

\begin{proof}
	The proof is presented in Appendix \ref{appB}.
\end{proof}
A risk constraint is imposed on the strategy by requiring that at the beginning of each period (i.e., at $t_n$, $n \in \{0,\ldots,N-1\}$) the investor must decide how much of the wealth is invested in the stocks ($\phe_{t_n}$) and how much is consumed ($\eta_{t_n}$) such that $\bn_{t_n}=(\phe_{t_n}, \eta_{t_n}) \in \cK_D ^R(t_n,X^{\bn}_{t_n})$ with
\begin{align*}
\cK_D ^R(t_n,x)=\left\{\bar \bn=(\bar\phe,\bar \eta) \in \cK_D (x) \,\,\big| \,\,\widetilde  \psi(t_n,x,\bar \phe,\bar \eta) \leq \widetilde  \epsilon(t_n,x)\right\}, \qquad x>0.
\end{align*}
Here, $\widetilde  \epsilon \colon [0,T]\times\R_+ \to [0,\infty)$ is a measurable function which represents the bound on the risk constraint and may depend on time and wealth. Then, the set of admissible strategies reads as
\begin{align*}
\cA_{D}^R :=\left\{(\bn_{t_n})_{n\in\{0,\ldots,N-1\}} \in \cA_{\Delta} \, |\,  \bn_{t_n}\in \cK^R_{\Delta}(t_n,X^{\bn}_{t_n}),\, n\in \{0,\ldots,N-1\}\right\}.
\end{align*}
We assume that the benchmark $\widetilde f(t_n,X^{\bn}_{t_n})$ and bound $\widetilde  \epsilon(t_n,X^{\bn}_{t_n})$ are specified in a way that $\cA_{D}^R \neq \varnothing$ is satisfied,  i.e., there exist admissible strategies.
\subsection{Optimization under risk constraints}
Given an initial wealth $X_0 =x_0>0$ the investor's investment-consumption problem is to decide how much wealth is invested in the stocks and how much is consumed so that the expected value of the utility from consumption and terminal wealth,
\begin{align*}
\E_{t_0,x_0}\Big[\sum_{n=0}^{N-1} U_1(\eta_{t_n}) + U_2\left(X^{\bn}_{t_N}\right)\Big],
\end{align*}
is maximized over all strategies $\bn=(\phe,\eta) \in \cA^R_{\Delta}$. This problem can be solved by using
the theory of Markov Decision Processes, see e.g. \textsc{Bäuerle} and \textsc{Rieder} \cite[Chapter
2]{bau}. The value function is defined by
\begin{align*}
V(t_n,x) = \sup_{(\phe,\eta) \in \cA_{D}^R} \E_{t_n,x}\Big[\sum_{k=n}^{N-1} U_1(\eta_{t_k}) + U_2\left(X^{\bn}_{t_N}\right)\Big]
\end{align*}
and the sequence $(V(t_n,x))_{n=0,\ldots,N-1}$ can be computed by the optimality equation
\begin{align}
\begin{split}\label{DPP}
V(t_N,x) &= U_2(x),\\
V(t_n,x) &= \hspace*{-1ex}\sup\limits_{(\bar \phe,\bar \eta) \in \cK_D ^R(t_n,x)}\hspace*{-1ex}\left\{U_1(\bar \eta)+ \E_{t_n,x}\left[V\left(t_{n+1},e^{r\Delta}\left(x- \bar \eta +\bar\phe'\cdot \Rb_{t_{n+1}}\right)\right)\right]\right\} \\
&\hspace*{50mm} n=N-1,\ldots,0.
\end{split}
\end{align}
The optimal strategy $\bn^{\ast}=(\phe^{\ast},\eta^{\ast})$ is generated by the sequence of maximizers of $V(t_1,x),\ldots,V(t_N,x)$.\\
For the case of power utility, i.e.,
\begin{align*}
U_1(x)=U_2(x)=\frac{x^{1-\gamma}}{1-\gamma}, \quad \gamma \in (0,1),
\end{align*}
we can specify the recursion above. Let $\zeta$ and $\bbeta$ denote the consumption and investment proportion, respectively, i.e.,
\begin{align*}
\zeta_{t_n} := \frac{\eta_{t_n}}{X^{\bn}_{t_n}} \quad \text{and}\quad\beta^i_{t_n} := \frac{\varphi^i_{t_n}}{X^{\bn}_{t_n}-\eta_{t_n}}, \quad \text{ for } n=0,\ldots,N-1, \, i=1,\ldots,d.
\end{align*}
Note that $\zeta_{t_n} \in [0,1]$ and $\bbeta_{t_n}\in \mathcal{P}$ where  $\mathcal{P}:=\{\mathbf{p}\in [0,1]^d~|~ p^1+\ldots+p^d\le 1\}$ denotes the simplex in $\R^d$. In order to use proportions instead of amounts of consumption and investment we redefine $\widetilde \psi$ by
\begin{align*}
\widetilde\psi_{rel}(t_n,X^{\bn}_{t_n},\bbeta_{t_n},\zeta_{t_n}) = \widetilde\psi(t_n,X^{\bn}_{t_n},(1-\zeta_{t_n})X^{\bn}_{t_n}\bbeta_{t_n},\zeta_{t_n}X^{\bn}_{t_n}).
\end{align*}  
Then the risk measures can be written as $\psi_{t_n}(\Loss_{t_n}) = \widetilde  \psi_{rel}(t_n,X^{\bn}_{t_n},\bbeta_{t_n},\zeta_{t_n})$ for all $n \in \{0,\ldots,N-1\}$ and we obtain by the optimality equation \eqref{DPP} the following backward recursion.
\begin{theorem}\label{DP Power}
	The value function for $n=0,\ldots,N$ can be computed by
	\begin{align*}
	V(t_n,x) = \frac{x^{1-\gamma}}{1-\gamma}\cdot d_{t_n} , \quad x> 0,
	\end{align*}
	where $(d_{t_n})$ satisfies the backward recursion
	\begin{align}
	\begin{split}\label{dp_power}
	d_{t_N}&=1,\\
	d_{t_n} &= \sup_{0\leq\bar\zeta\leq 1}\Big\{\bar{\zeta}^{1-\gamma}+(1-\bar \zeta)^{1-\gamma} e^{r\Delta (1-\gamma)} \times \\
	&\hspace*{5mm}\Big(\sup_{\bar \bbeta \in \mathcal{B}(t_n,x,\bar \zeta)}\E\big[\big(1+\bar \bbeta' \cdot \Rb_{t_{n+1}}\big)^{1-\gamma}\big]\Big) d_{t_{n+1}} \Big\},
	~  {n=N-1,\ldots,0,}
	\end{split}
	\end{align}
	where
	\begin{align*}
	\mathcal{B}(t_n,x,\bar \zeta) = \{\bar \bbeta \in  \mathcal{P} \,\,\big|\,\, \widetilde  \psi_{rel}(t_n,x,\bar \bbeta,\bar \zeta) \leq \widetilde  \epsilon(t_n,x)\}.
	\end{align*}
\end{theorem}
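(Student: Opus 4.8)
I would argue by backward induction on $n$, exploiting that the power utility is homogeneous of degree $1-\gamma$, so that a factor of wealth can be pulled out of the optimality equation \eqref{DPP}. The base case $n=N$ is immediate from the terminal condition: $V(t_N,x)=U_2(x)=\frac{x^{1-\gamma}}{1-\gamma}$, which has the asserted form with $d_{t_N}=1$.

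For the inductive step I would assume $V(t_{n+1},x)=\frac{x^{1-\gamma}}{1-\gamma}\,d_{t_{n+1}}$ for all $x>0$ and insert this into the right-hand side of \eqref{DPP}. I would then reparametrize a control $(\bar\phe,\bar\eta)$ through the consumption and investment proportions via $\bar\eta=\bar\zeta x$ and $\bar\phe=(1-\bar\zeta)x\,\bar\bbeta$; a short computation gives $x-\bar\eta+\bar\phe'\cdot\Rb_{t_{n+1}}=(1-\bar\zeta)x\,(1+\bar\bbeta'\cdot\Rb_{t_{n+1}})$, so the wealth at $t_{n+1}$ equals $e^{r\Delta}(1-\bar\zeta)x\,(1+\bar\bbeta'\cdot\Rb_{t_{n+1}})$. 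Using $U_1(\bar\eta)=\frac{(\bar\zeta x)^{1-\gamma}}{1-\gamma}$, using that $\Rb_{t_{n+1}}$ is independent of $\cG_{t_n}$ so that the conditional expectation in \eqref{DPP} becomes an ordinary one, and pulling the common factor $\frac{x^{1-\gamma}}{1-\gamma}$ out of both summands, one arrives at
\begin{align*}
V(t_n,x)=\frac{x^{1-\gamma}}{1-\gamma}\,\sup\Big\{\bar\zeta^{1-\gamma}&+(1-\bar\zeta)^{1-\gamma}\,e^{r\Delta(1-\gamma)}\,d_{t_{n+1}}\times\\
&\quad\E\big[(1+\bar\bbeta'\cdot\Rb_{t_{n+1}})^{1-\gamma}\big]\Big\},
\end{align*}
the supremum running over all admissible pairs.

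It then remains to express the constraint $(\bar\phe,\bar\eta)\in\cK_D^R(t_n,x)$ in the new coordinates: $0\le\bar\eta\le x$ becomes $\bar\zeta\in[0,1]$, $0\le\bar\phe'\1\le x-\bar\eta$ becomes $\bar\bbeta\in\mathcal P$, and $\widetilde\psi(t_n,x,\bar\phe,\bar\eta)\le\widetilde\epsilon(t_n,x)$ becomes $\widetilde\psi_{rel}(t_n,x,\bar\bbeta,\bar\zeta)\le\widetilde\epsilon(t_n,x)$, that is, $\bar\bbeta\in\mathcal B(t_n,x,\bar\zeta)$. Since the summand $\bar\zeta^{1-\gamma}$ does not involve $\bar\bbeta$, the supremum splits into an outer one over $\bar\zeta\in[0,1]$ and an inner one over $\bar\bbeta\in\mathcal B(t_n,x,\bar\zeta)$ of the expectation term; this is exactly the recursion \eqref{dp_power} for $d_{t_n}$, and the induction closes. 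The statement on the optimal strategy then follows because the maximizers in \eqref{dp_power} pull back, via the reparametrization, to the maximizers in \eqref{DPP}.

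The remaining points are routine but should be checked. One has to verify that the reparametrization is a bijection between $\cK_D(x)$ and $\{(\bar\zeta,\bar\bbeta)\,:\,\bar\zeta\in[0,1],\ \bar\bbeta\in\mathcal P\}$, with the understanding that at $\bar\zeta=1$ the investment amount $\bar\phe$ is forced to be $0$ and the choice of $\bar\bbeta$ is immaterial. One also has to confirm that $(1+\bar\bbeta'\cdot\Rb_{t_{n+1}})^{1-\gamma}$ is well defined and integrable; here $R^k_{t_{n+1}}=e^{-r\Delta}\widetilde R^k_{t_{n+1}}-1>-1$ together with $\bar\bbeta\in\mathcal P$ forces $1+\bar\bbeta'\cdot\Rb_{t_{n+1}}>0$ almost surely, and the log-normal returns have moments of all orders, so the expectation is finite and positive. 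Finally, the validity of \eqref{DPP} itself and of the generation of $\bn^\ast$ by the maximizers rests on the structural and measurable-selection assumptions of the Markov decision framework of B\"auerle and Rieder \cite{bau}. None of these is difficult; the only step needing genuine care is the bookkeeping in the constraint translation -- in particular keeping track that $d_{t_n}$ may still depend on $x$ through $\mathcal B(t_n,x,\bar\zeta)$, unless the benchmark $\widetilde f$ and the bound $\widetilde\epsilon$ are positively homogeneous of degree one in wealth.
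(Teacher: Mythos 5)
Your proposal is correct and follows essentially the same route as the paper: backward induction with the substitution $\bar\eta=\bar\zeta x$, $\bar\phe=(1-\bar\zeta)x\bar\bbeta$, independence of $\Rb_{t_{n+1}}$ from $\cG_{t_n}$, and splitting the supremum into an iterated one over $\bar\zeta$ and $\bar\bbeta$. Your closing caveat — that $d_{t_n}$ defined by \eqref{dp_power} is only genuinely independent of $x$ when the benchmark $\widetilde f$ and the bound $\widetilde\epsilon$ are positively homogeneous of degree one in wealth — is a valid and worthwhile observation that the paper's own proof leaves implicit.
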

\begin{proof}
	The proof is given by mathematical  induction. The basis step is to show that the statement
	\begin{align*}
	V(t_N,x) = \frac{x^{1-\gamma}}{1-\gamma}\cdot d_{t_N},\qquad d_{t_N}=1
	\end{align*}
	holds. This follows straightforwardly from the optimality equation \eqref{DPP}. In the inductive step we assume that the statement
	\begin{align*}
	V(t_n,x) = \frac{x^{1-\gamma}}{1-\gamma}\cdot d_{t_n}
	\end{align*}
	holds for some $n \in \{1,\ldots,N\}$ and show that it also holds for $n-1$. This can be done as follows. From the optimality equation \eqref{DPP} we obtain
	\begin{align*}
	V(t_{n-1},x) = \sup\limits_{(\bar \phe,\bar \eta) \in \cK_D ^R(t_{n-1},x)}\Big\{&\E_{t_{n-1},x}\left[V\left(t_{n},e^{r\Delta}\left(x- \bar \eta +\bar\phe'\cdot \Rb_{t_{n}}\right)\right)\right]\\
	&+\frac{\bar \eta^{1-\gamma}}{1-\gamma}\Big\}.
	\end{align*}
	Using the inductive hypothesis, the right-hand side can be rewritten as
	\begin{align*}
	\sup\limits_{(\bar \phe,\bar \eta) \in \cK_D ^R(t_{n-1},x)}\Big\{ \E_{t_{n-1},x}\Big[\frac{1}{1-\gamma}\left(e^{r\Delta}\left(x- \bar \eta +\bar\phe'\cdot \Rb_{t_{n}}\right)\right)^{1-\gamma}d_{t_n}\Big]
	+\frac{\bar \eta^{1-\gamma}}{1-\gamma}\Big\}.
	\end{align*}
	Substituting $\bar \eta=\bar \zeta x$ and $\bar \phe=(1-\bar \zeta)x\bar \bbeta$ we find 
	\begin{align*}
	\sup\limits_{\substack{0\leq \bar \zeta \leq 1\\ \bar \bbeta \in \mathcal{B}(t_{n-1},x,\bar \zeta)}}\Big\{& \E_{t_{n-1},x}\left[\frac{1}{1-\gamma}\left(e^{r\Delta}\left(x- \bar \zeta  x + x(1-\bar{\zeta})\cdot \bar\bbeta' \Rb_{t_{n}}\right)\right)^{1-\gamma}d_{t_n}\right]\\
	&+\frac{(\bar \zeta x)^{1-\gamma}}{1-\gamma}\Big\}\\
	=\sup\limits_{\substack{0\leq \bar \zeta \leq 1\\ \bar \bbeta \in \mathcal{B}(t_{n-1},x,\bar \zeta)}} \Big\{& \frac{x^{1-\gamma}}{1-\gamma}e^{r\Delta(1-\gamma)}(1- \bar \zeta)^{1-\gamma}\E\left[(1 +\bar\bbeta' \Rb_{t_{n}})^{1-\gamma}\right]d_{t_n}\\
	&+\frac{ x^{1-\gamma}}{1-\gamma}\bar \zeta^{1-\gamma}\Big\}\\
	=\,{\frac{x^{1-\gamma}}{1-\gamma}} \sup\limits_{0\leq \bar \zeta \leq 1 }\Big\{&\bar \zeta^{1-\gamma}+ e^{r\Delta(1-\gamma)}(1- \bar \zeta)^{1-\gamma}\\
	&\sup\limits_{\bar \bbeta \in \mathcal{B}(t_{n-1},x,\bar\zeta)}\E\left[(1 +\bar\bbeta' \Rb_{t_{n}})^{1-\gamma}\right]\cdot d_{t_{n}}\Big\}.
	\end{align*}
	This shows  that indeed it holds $V(t_{n-1},x) = \frac{x^{1-\gamma}}{1-\gamma}\cdot d_{t_{n-1}}$.
	In the second equation we have used that the relative discounted return $\Rb_{t_n}$ is independent of $\cG_{t_{n-1}}$ and in the third equation we used that $e^{r\Delta(1-\gamma)}(1- \bar \zeta)^{1-\gamma}$ is non-negative. Hence, the supremum over $(\bar\bbeta, \bar\zeta)$ can be obtained by the iterated supremum as given.
	The proof is complete by mathematical induction.
\end{proof}
Note that in contrast to the continuous-time case, there is no closed-form solution to the unconstrained problem in discrete time. The Merton portfolio-proportion strategy $(\bbeta^M_{t_n}, \zeta^M_{t_n})_{n=0,\ldots,N-1}$ also has to be computed by backward recursion as in the above theorem where the set $\mathcal{B}$ is replaced by the simplex $\mathcal{P}$, cf. \cite{bau}. Then, we obtain
\begin{align*}
\bbeta_{t_n}^M= \arg\max_{\bar \bbeta \in \mathcal{P}}\E\left[(1 +\bar\bbeta' \Rb_{t_{n+1}})^{1-\gamma}\right] \, \text{ and }\, \zeta^M_{t_n}=\Big[1+\big(e^{r\Delta(1-\gamma)}v_{t_n}d_{t_n}\big)^{\frac{1}{\gamma}}\Big]^{-1}
\end{align*}	
for $n=0,\ldots,N-1$ with 
\begin{align*}
v_{t_n} :=\sup_{\bar \bbeta \in \mathcal{P}}\E\big[\big(1+\bar \bbeta' \cdot \Rb_{t_{n+1}}\big)^{1-\gamma}\big], \quad n=0,\ldots,N-1.
\end{align*}	
The value function and wealth obtained by an investor using the Merton portfolio-proportion strategy are denoted by $V^M$ and $X^{M}$, respectively.	
\section{Numerical examples for power utility}\label{sec4}
In this section we solve the continuous-time optimization problem
\begin{align*}
\sup_{(\pe,c)\in \cA^R}\E_{0,x_0}\Big[\int\nolimits
_0^T U_1(C_t)\d t + U_2\left(X^{(\pe,c)}_T\right)\Big]
\end{align*}
and the discrete-time optimization problem
\begin{align*}
\sup_{(\phe,\eta) \in \cA_{D}^R}\E_{0,x_0}\Big[\sum_{n=0}^{N-1} U_1(\eta_{t_n}) + U_2\left(X^{(\phe,\eta)}_{t_N}\right)\Big]
\end{align*}
numerically for the case of power utility and compare the solutions. Our numerical experiments are based on the following model parameters. The financial market consists of a bond with risk-free interest rate $r=0.1$ and a single stock with drift $\mu = 0.18$ and volatility $\sigma =0.35$. If not stated otherwise, the parameter of the power utility function is $\gamma = 0.3$ and the terminal trading time is $T=2$ years. In the continuous-time case the dynamic risk measures are evaluated under the assumption that the portfolio-proportion process is kept constant between $t$ and $t+\Delta$ with $\Delta = \frac{1}{24} \approx 2$ weeks. The probability level in the definition of the Value at Risk and Tail Conditional Expectation is given by $\alpha = 0.01$.
\subsection{Continuous-time optimization problem}
We start with the continuous-time problem by numerically solving the HJB equation \eqref{HJB}. Using a policy improvement (PI) algorithm, we obtain an approximation of the value function and of the optimal portfolio-proportion process. In each iteration of the PI algorithm, we have to solve a linear partial differential equation (PDE) and a constrained optimization problem. Both, the linear PDE and the constrained optimization problem, are solved numerically. The former by using meshless methods (see \cite[Chapter 10]{ji} and \cite[Chapter 16]{duf}), the latter with sequential quadratic programming methods (see \cite[Chapter 18]{noc}).\\
\begin{figure}[h]
	\centering
	\scalebox{0.31}{\subfigure{\includegraphics{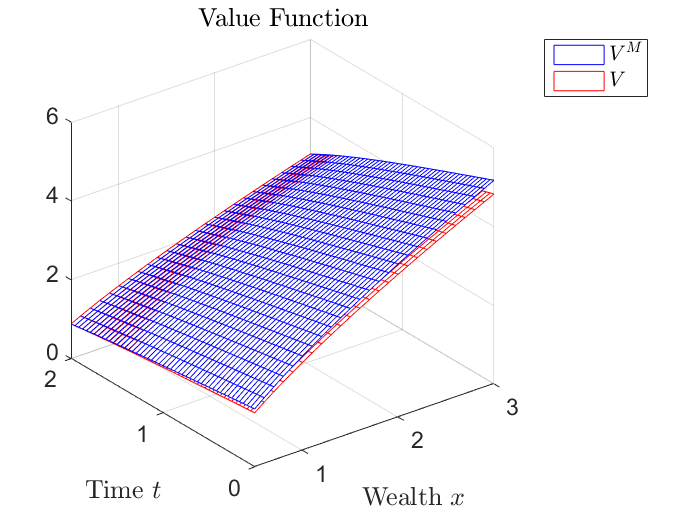}}}
	\scalebox{0.31}{\subfigure{\includegraphics{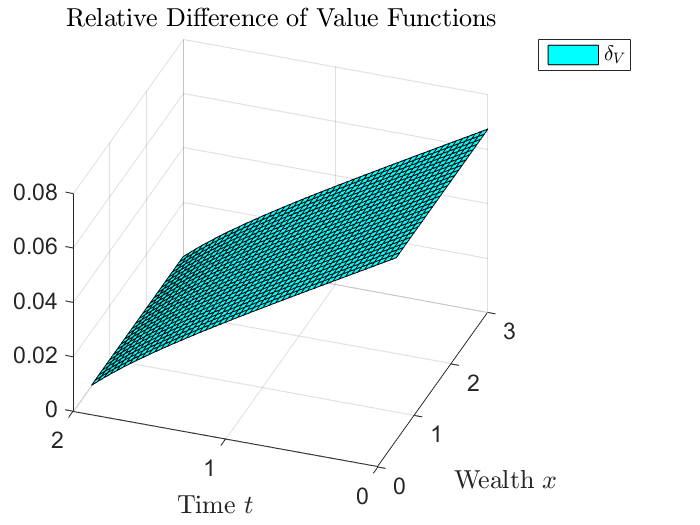}}}
	\scalebox{0.31}{\subfigure{\includegraphics{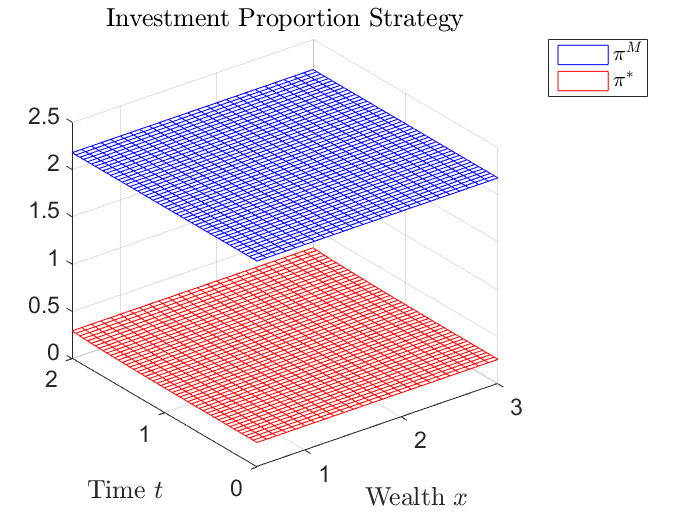}}}
	\scalebox{0.31}{\subfigure{\includegraphics{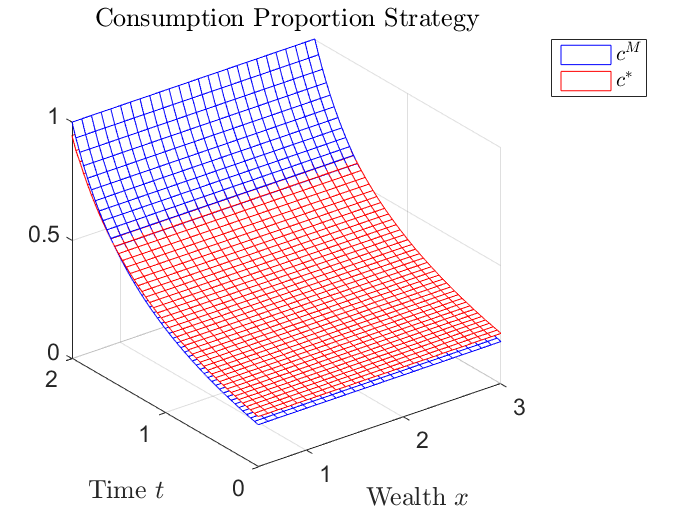}}}
	\captionsetup{justification=raggedright}
	\caption{Effect of the VaR constraint on the value function and optimal portfolio-proportion strategy for the continuous-time problem. The benchmark and bound are given by $Y_t^M$ and $\widetilde  \varepsilon(t,x) = 0.05 x$, respectively.}\label{fig:1}
\end{figure}
Figure \ref{fig:1} shows the effect of the VaR constraint on the value function and optimal portfolio-proportion process. The benchmark for the VaR is the conditional expectation of wealth $X_{t+\Delta}^M$ obtained by an investor following the Merton portfolio-proportion process $(\pi_t^M,c_t^M)$ (cf. \eqref{Merton_strategy}) in $[t, t+\Delta]$ given the wealth $X_t^{(\pi,c)}$ at time $t$, i.e.,
\begin{align*}
Y_t^M := \E[X^{M}_{t+\Delta}|X_t^{(\pi,c)}] = X_t^{(\pi,c)}\exp\left\{ \left(r + \pi_t^M(\mu -r)-c_t^M\right) \Delta \right\}.
\end{align*}
The bound for the VaR is given by $\widetilde  \varepsilon(t,x) = 0.05 x$. This is a bound which is relative to the wealth, i.e., any loss in the interval $[t,t+\Delta]$ can be hedged with 5\% of the portfolio value. By comparison, the risk of the Merton portfolio-proportion process $(\pi_t^M,c_t^M)$ which is held constant in $[t,t+\Delta]$ is $\widetilde  \varepsilon(t,x) \approx 0.31 x$. A first look at Figure \ref{fig:1} indicates that the value function and the relative consumption are not remarkably affected by the VaR constraint whereas the proportion of wealth invested in the risky stock is considerably reduced. The top, right-hand panel of Figure \ref{fig:1} shows the relative difference $\delta_{V}(t,x)$ between the value function of a Merton investor and a VaR-constrained investor defined by  
	\begin{align*}
	\delta_{V}(t,x)=\frac{V^M(t,x)-V(t,x)}{V^M(t,x)}.
	\end{align*}
	In order to facilitate the comparison of the value functions we express the losses of performance due to the risk constraint in monetary units  and 	
	introduce the following efficiency measure. The efficiency of an investor $A$ relative to an investor $B$ is the initial amount of wealth that investor $B$ would need to obtain a value function identical to that of investor $A$ who started at time $t=0$ with unit wealth. Figure \ref{fig:eff_con} illustrates the efficiency of a VaR-constrained investor relative to a Merton investor in continuous time for different relative bounds of the VaR-constraint. As expected, the efficiency increases when the bound of the risk constraint becomes less restrictive. For the bound $\widetilde \varepsilon(t,x)=0.05x$ used in Fig.~\ref{fig:1} the loss of efficiency is about 9.5\%. For the most restrictive case, where wealth below the benchmark is not tolerated at all, i.e., the bound is set to $\widetilde  \varepsilon(t,x) = 0$, the loss of efficiency is  12.6\%.\\
\begin{figure}[h]
	\centering
	\scalebox{0.31}{\includegraphics{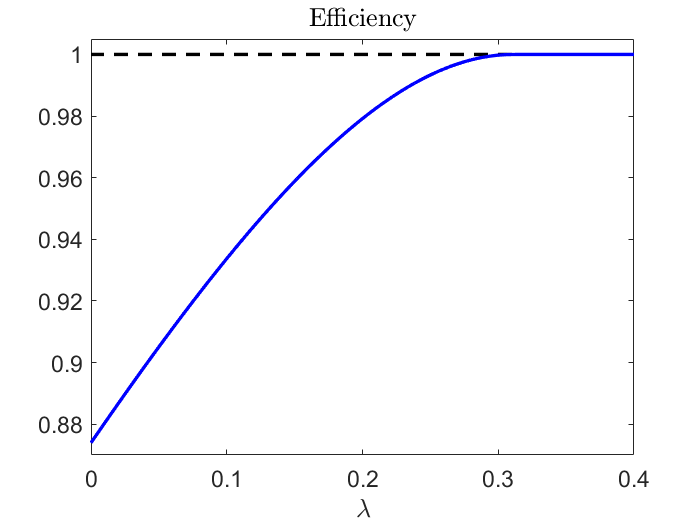}}
	\captionsetup{justification=raggedright}
	\caption{Efficiency of a VaR-constrained investor relative to a Merton investor in continuous time for different relative bounds $\widetilde  \varepsilon(t,x) =  \lambda x$. The benchmark is given by $Y^M_t$.} \label{fig:eff_con}
\end{figure} 
Using a relative bound leads to a constant proportion of wealth invested in the risky stock as in the unconstrained case. If we change the relative bound to an absolute one, e.g. $\widetilde  \varepsilon(t,x)\equiv 0.05$, the optimal proportion of wealth invested in the risky stock is no longer independent of the wealth level, cf. Figure \ref{fig:3}. This results from the fact that for $x<1$, the absolute bound is less restrictive and for $x>1$, it is more restrictive than the relative one. For the optimal relative consumption  no remarkable differences between an absolute and a relative bound are observed.
\begin{figure}[!h]
	\centering
	\scalebox{0.31}{\subfigure{\includegraphics{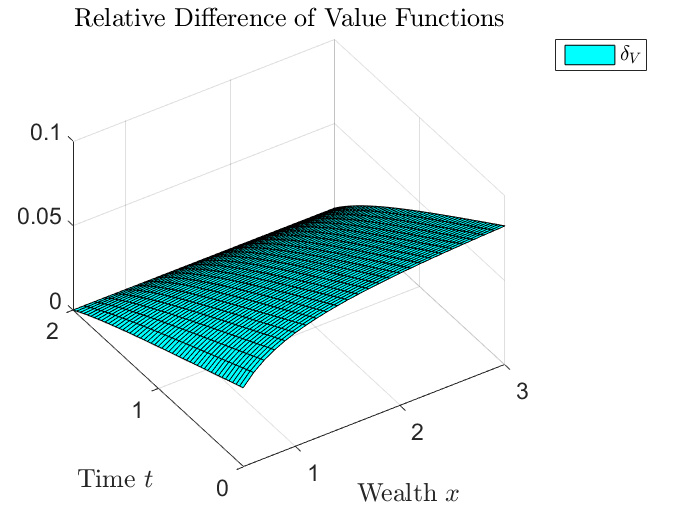}}}
	\scalebox{0.31}{\subfigure{\includegraphics{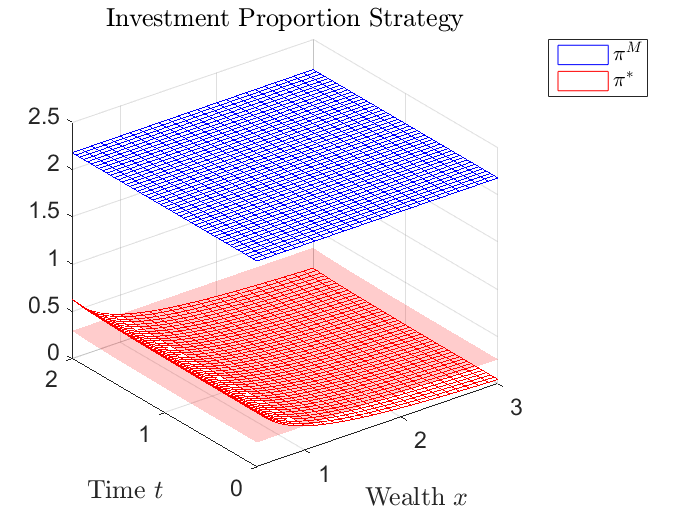}}}
	\captionsetup{justification=raggedright}
	\caption{Effect of the VaR constraint on the value function and optimal investment proportion strategy for the continuous-time problem. The benchmark and bound are given by $Y_t^M$ and $\widetilde  \varepsilon(t,x) = 0.05$, respectively.}\label{fig:3}
\end{figure}
\begin{figure}[h!]
	\centering
	\hspace*{-5mm}
	\scalebox{1.00}{\subfigure{\includegraphics[width=0.51\textwidth]{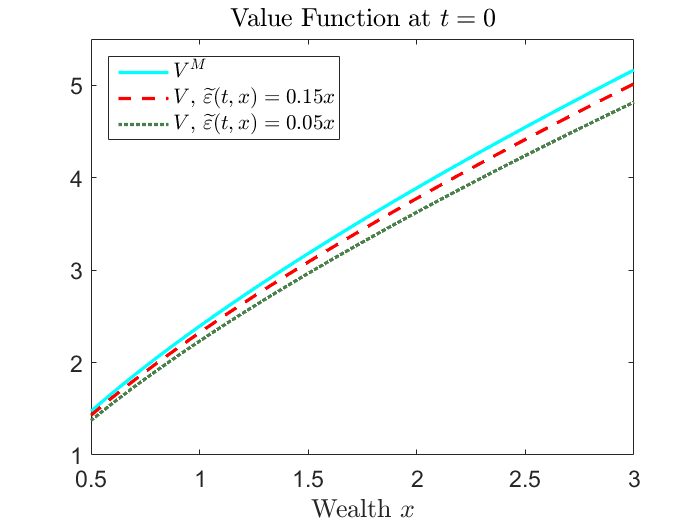}}}
	\hspace*{-5mm}
	\scalebox{1.00}{\subfigure{\includegraphics[width=0.51\textwidth]{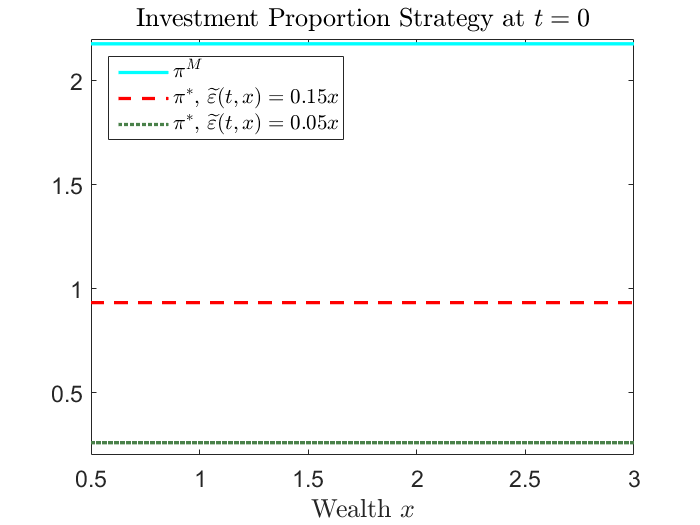}}}
	
	\vspace*{-3mm}
	\hspace*{-5mm}
	\scalebox{1.00}{\subfigure{\includegraphics[width=0.51\textwidth]{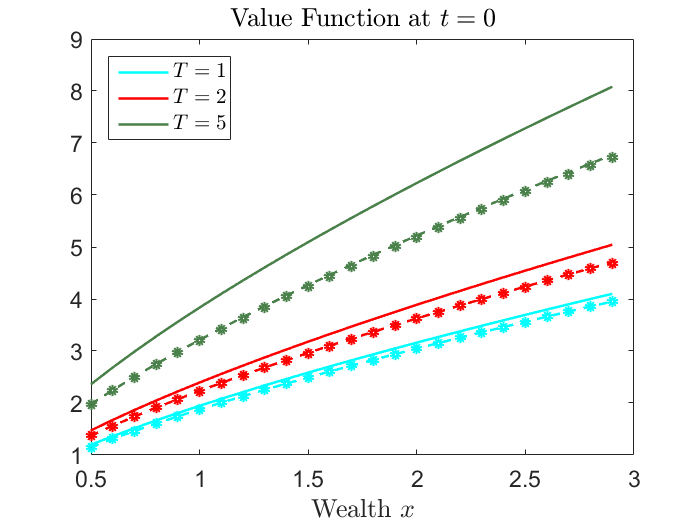}}}
	\hspace*{-5mm}
	\scalebox{1.00}{\subfigure{\includegraphics[width=0.51\textwidth]{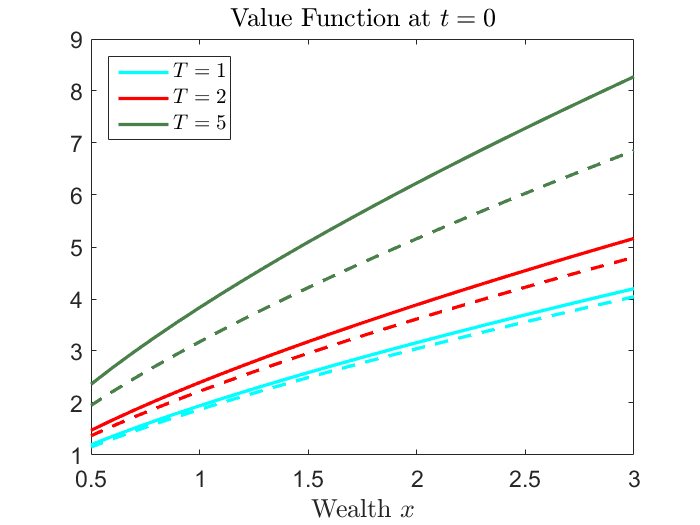}}}
	\captionsetup{justification=raggedright}
	\caption{\small{\textbf{Top panels:} Effect of different relative bounds ($\tilde \varepsilon(t,x)=0.15x,0.05x$) for the VaR constraint on the value function and optimal investment proportion strategy in the continuous-time case. The benchmark and terminal trading time are given by $Y_t^M$ and $T=2$, respectively.\\
			\textbf{Bottom panels:} Effect of different terminal trading times ($T=1,2,5$) and different risk constraints (VaR, TCE, EL) on the value function in the continuous-time case. The benchmark is given by $Y_t^M$.\\
			\textbf{Bottom left panel:} Value function of the Merton investor (solid line), the VaR-constrained investor (dashed line) and the TCE-constrained investor (asterisk) for $\tilde \varepsilon(t,x)=0.05x$.\\
			\textbf{Bottom right panel:} Value function of the Merton investor (solid line) and the EL-constrained investor (dashed line) for $\tilde \varepsilon(t,x)=0.01x$.}} \label{fig:4}
\end{figure}
From the top panels of Figure \ref{fig:4}, we can observe that a more restrictive bound leads to a smaller investment in the risky stock and a smaller value function. Another interesting observation can be made from the bottom panels of Figure \ref{fig:4}. The value function at time $t=0$ is plotted against the wealth level for different terminal trading times $T$. The left panel shows the value function of a Merton investor (solid line),  a VaR-constrained investor (dashed line) and a TCE-constrained investor (asterisk). In the right panel, the value function of a Merton investor (solid line) and of a EL-constrained investor (dashed line) is plotted. We observe that the value function is not noticeably affected by the choice of different risk constraints and that there are almost no differences between the VaR constraint and the TCE constraint. This confirms similar results by \cite{cuo}. Even though a static VaR constraint has been found to induce an increased probability of extreme losses and an increased allocation to risky assets in some states (see \cite{bs}), these shortcomings vanish if a VaR constraint is imposed dynamically (cf. \cite{cuo}).

For longer trading horizons, the effects of a risk constraint on the value function become more noticeable. We obtain comparable results when we change the benchmark to the wealth an investor will obtain at time $t+\Delta$, starting with $X_t^{(\pi,c)}$ at time $t$, while only investing in the bond (and not in the stock) and consuming the wealth with the rate $c_t^M$ in $[t,t+\Delta]$, i.e.,
$Y_t = X^{(\pi,c)}_{t} \cdot e^{(r-c_t^M)\Delta}.$
\begin{remark}
	We also performed numerical experiments for the case $\gamma > 1$ instead of $\gamma=0.3$ which led to smaller differences between the value function of a Merton investor and a risk-constrained investor. This results from the fact that for a larger $\gamma$ an investor is more risk-averse even without an imposed risk constraint.
\end{remark}
\subsection{Discrete-time optimization problem}
We now consider the discrete-time problem and apply the recursion in Theorem \ref{DP Power}. The expectation in \eqref{dp_power} can be written as
\begin{align*}
\E&\left[\left(1+\bar \beta \cdot R_{t_{n+1}}\right)^{1-\gamma}\right]\\
&=\int_0^{\infty}\Big[ \big(1+\bar \beta (u-1)\big)^{1-\gamma}\cdot \frac{1}{\sqrt{2\pi\Delta}\sigma u}\exp\Big\{-\frac{\big(\ln u-(\mu-r-\frac{\sigma^2}{2})\Delta\big)^2}{2\sigma^2\Delta}\Big\}\Big]\d u,
\end{align*}	
where we used that $R_{t_{n+1}}=\widetilde R_{t_{n+1}}/e^{r\Delta}-1$ and $\widetilde R_{t_{n+1}}$ is log-normally distributed. The above integral is evaluated numerically using quadrature rules.
\begin{figure}[!h]
	\centering
	\scalebox{0.31}{\subfigure{\includegraphics{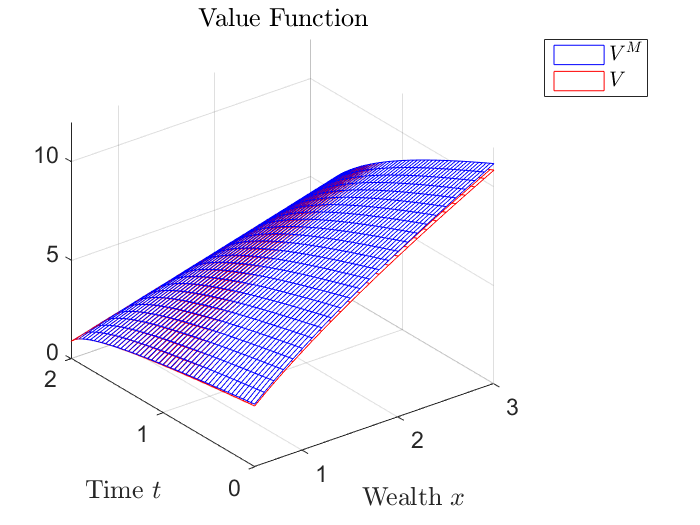}}}
	\scalebox{0.31}{\subfigure{\includegraphics{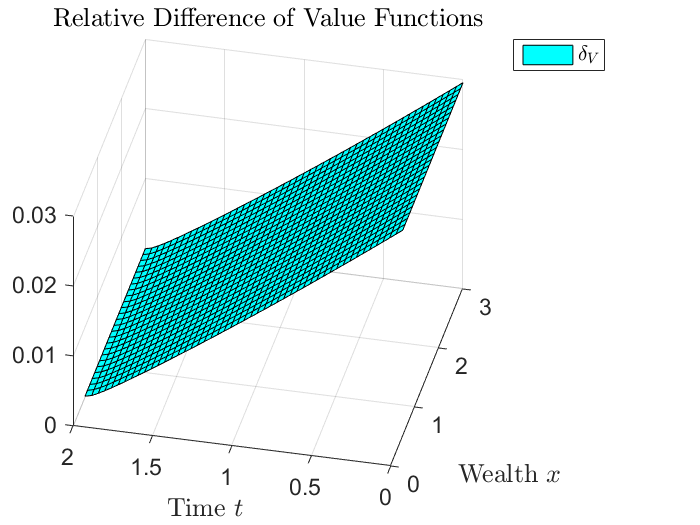}}}	
	\scalebox{0.31}{\subfigure{\includegraphics{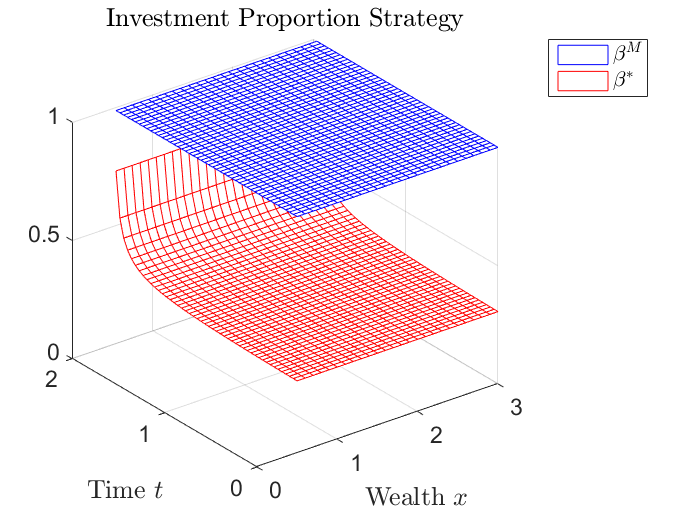}}}
	\scalebox{0.31}{\subfigure{\includegraphics{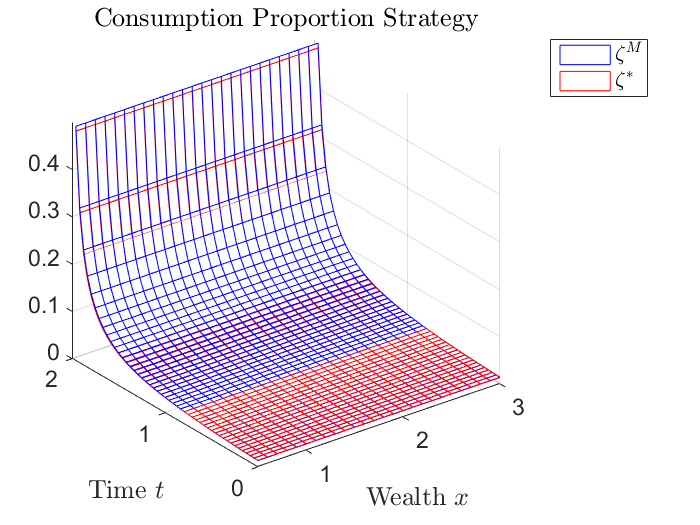}}}
	\captionsetup{justification=raggedright}
	\caption{Effect of the VaR constraint on the value function and optimal portfolio-proportion strategy in the discrete-time case. The benchmark and bound are given by $Y^M_{t_n}$ and $\widetilde \epsilon(t_n,x) = 0.05x$, respectively. }\label{fig:7}
\end{figure}
Figure \ref{fig:7} shows the effect of the VaR constraint on the value function and optimal portfolio-proportion strategy. The benchmark for the VaR is the conditional expectation of wealth $X_{t_{n+1}}^M$ obtained by an investor following the (discrete-time) Merton strategy $\varphi_{t_n}^M,\eta_{t_n}^M$ given the wealth $X_{t_n}^{(\varphi,\eta)}$ at time $t_n$, i.e.,
\begin{align*}
Y^M_{t_n} = \E[X^M_{t_{n+1}}|X_{t_n}^{(\varphi,\eta)}] = e^{r\Delta}\left(X_{t_n}^{(\varphi,\eta)}- \eta^M_{t_n} - \varphi^M_{t_n}\right) + e^{\mu \Delta} \varphi^M_{t_n}.
\end{align*}
The bound for the VaR constraint was given by $\tilde \epsilon(t_n,x)=0.05x$. By comparison, the risk of the Merton strategy $(\varphi_{t_n}^M,\eta_{t_n}^M)$ is $\widetilde  \epsilon(t,x) \approx 0.16 x$. It can be observed that the value function is not remarkably affected by the VaR constraint. Moreover, the effects on the value function are even less notable than in the continuous-time case. This results from the fact that even without an imposed risk constraint short-selling the stock or bond is not allowed, thus the proportion invested in the stock is always in $[0,1]$. Note that in the above example for the continuous-time problem, the Merton investment proportion is $\pi^M\approx 2.18$, i.e.,~it exceeds one. Furthermore, it can be observed from Figure \ref{fig:7} that the fraction of wealth invested in the risky stock is considerably reduced when the VaR constraint is imposed, whereas the differences in the consumption rate between a VaR-constraint investor and a Merton investor are hard to distinguish visually. If we use an absolute bound $\widetilde  \epsilon(t,x)\equiv 0.05$ for the VaR constraint instead of a relative bound, the optimal investment strategy is no longer a constant proportion of wealth and the results are similar to Figure \ref{fig:3} for the continuous-time case. Numerical results for varying the terminal trading time or the risk measure are not shown here, but they are comparable to the continuous-time case, cf.~Figure \ref{fig:4}.\\
As in the continuous-time case we now express the losses of performance of a risk-constrained investor relative to the performance of a Merton investor in monetary units  using the efficiency measure.
	The left panel of Figure \ref{fig:13} illustrates the efficiency of a VaR-constrained investor relative to a Merton investor in discrete time for different relative bounds of the VaR-constraint. As expected and already observed in Fig.~\ref{fig:eff_con}, the efficiency increases when the bound of the risk constraint becomes less restrictive. The loss of efficiency is at most 7.2\% which is attained for the most restrictive bound, i.e., $\widetilde \epsilon(t,x)=0$. For the bound $\widetilde \epsilon(t,x)=0.05x$ used in Fig.~\ref{fig:7} the loss of efficiency is about 4.2\%.
\\
\begin{figure}[!t]
	\centering
	\subfigure{\includegraphics[width=0.49\textwidth]{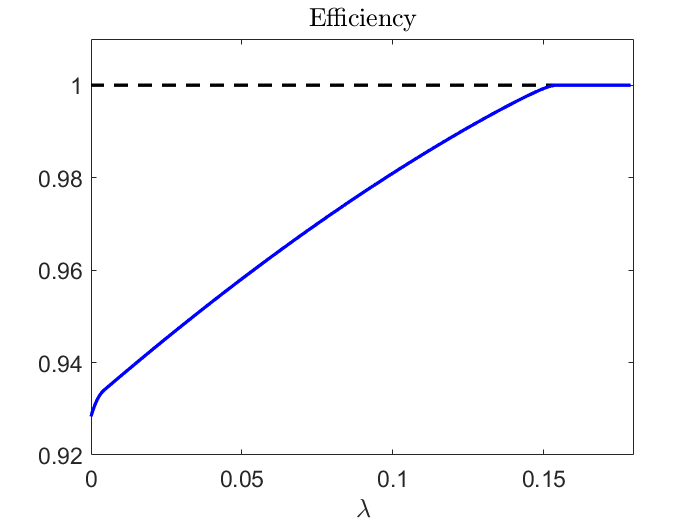}}
	\subfigure{\includegraphics[width=0.49\textwidth]{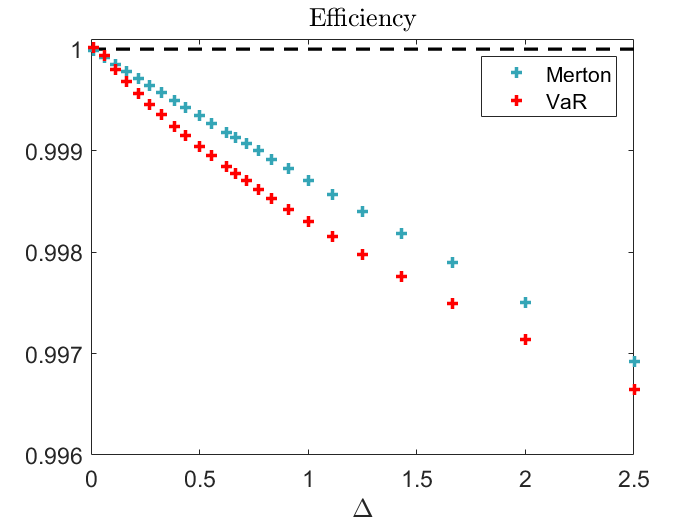}}
	\captionsetup{justification=raggedright}
	\caption{Efficiency for different bounds $\widetilde  \epsilon(t,x) =  \lambda x$ (left panel) and for different risk measurement horizons $\Delta$ (right panel). The benchmark is given by $Y^M_t$. \\
		\textbf{Left panel:} Efficiency of a VaR-constrained investor relative to a Merton investor in discrete time for $\gamma=0.3$ and $\Delta=1/24$.\\
		\textbf{Right panel:} Efficiency of a discrete-time relative to a  continuous-time Merton investor and efficiency of a discrete-time relative to a continuous-time  VaR-constrained investor for $\gamma=0.9$ and $\widetilde  \epsilon(t,x) =0.05x$.}\label{fig:13}
\end{figure}
We finish this section with a numerical comparison of the continuous-time and discrete-time case. The availability of solutions to the risk-constrained portfolio problem both for discrete and continuous time allows us to quantify the losses of portfolio performance resulting from time discretization, i.e.,~from the restriction to finite trading frequencies. First observe that the loss of efficiency resulting from imposing a risk-constraint is higher in continuous time than in discrete time, cf. Figure \ref{fig:eff_con} and left panel of Figure \ref{fig:13}. This arises from the fact that short-selling is allowed in continuous time leading to $\pi^M\approx 2.18 >1$ whereas it is not allowed in discrete time leading to $\beta^M=1$. For a fair comparison we do not allow for short-selling in both cases. Since the parameters of the financial market shall remain the same, we change the investors' preferences represented by the utility functions by setting $\gamma=0.9$. This leads to $\pi^M,\, \beta^M \in [0,1]$, i.e., there is no short-selling in both cases. Furthermore, the expected utility from consumption in the time interval $[t_n,t_{n+1})$, $n=0,\ldots,N-1$, is given by $\E[\int_{t_n}^{t_{n+1}}U_1(C_s) \d s]$ in the continuous-time case. However, in the discrete-time case we have $\E[U_1(\eta_{t_n})]$, which approximately corresponds to $\E[U_1(\int_{t_n}^{t_{n+1}}C_s \d s)]$.  The related maximization problem 
	\begin{align} \label{op2} \nonumber
	\mathbb{E}_{t_0,x_0}\Big[ U_1\Big(\int_0^T C_t \d \,t\Big) + U_2(X^u_T)\Big]
	\end{align} 
	is not only different from the economic interpretation, but also from the mathematical point of view, cf. \textsc{Grandits et al.} \cite{gra}. Since we want to quantify the losses of portfolio performance resulting solely from restricting to finite trading frequencies, we set $U_1(x)\equiv 0$. Again, we use the efficiency to compare the portfolio performance of a discrete-time investor relative to a continuous-time investor with and without imposing a risk constraint. For this experiment, we compute the VaR risk measure for the continuous-time investor under the assumption that the number of shares remains constant between $t$ and $t+\Delta$. In \textsc{Rogers} \cite{rog} the continuous-time investment problem is compared to the discrete-time investment problem when no risk constraint is imposed. The author shows that there is only a small difference between the continuous and discrete problem when the additional restriction that the continuous-time investor is not allowed to sell short the stock or the bond is imposed. \textsc{Bäuerle} et al. \cite{buv} show that this remains true for models where the drift is modeled by a random variable that is not directly observable and has to be estimated from observed stock prices. However, the authors found that when the short-selling restriction for the continuous-time investor is omitted, the discrete-time investor will generally not do as well as the continuous-time investor and a discretization gap remains. The right panel of Figure \ref{fig:13} shows the efficiency of a discrete-time  relative to a continuous-time Merton investor (cyan) and the efficiency of a discrete-time relative to a continuous-time VaR-constrained investor (red). We observe that the values of the efficiency are very close to unity; even for a quite large $\Delta$ in the range of $2.5$ years, the loss of efficiency is at most $0.35\%$ in both cases. This  is in line with the results by \textsc{Rogers} \cite{rog} for the Merton investor. In addition to the results in \cite{rog} our numerical results for the risk-constrained case indicate that the losses due to time discretization are of comparable (small) magnitude.

\bibliographystyle{plain}

\begin{appendix}
	\section{Proof of Lemma \ref{lem:risk_measures}}\label{appA}
	Under the assumption that the portfolio-proportion strategy $(\pe,c)$ is kept constant and equal to $(\bar \pe,\bar c)$ between time $t$ and $t+\Delta$ the wealth at time $t+\Delta$ is given by
	\begin{align*}
	X_{t+\Delta} = \exp\Big\{&\ln\big( X_t\big)+\Big(\bar\pe'(\me-\1 r)+r-\bar c-\frac{\lVert\bar\pe'\se\rVert^2}{2}\Big)\Delta  \\
	&+ \bar\pe'\se (\We_{t+\Delta }-\We_t)\Big\}.
	\end{align*}
	From the equation above we obtain that $X_{t+\Delta}$ is - conditionally on $\cF_t$ - distributed as $e^Z$, where $Z$ is normally distributed with mean $m$ and variance $s^2$. Here,
	\begin{align*}
	m:= \ln\left(X_t\right)+\Big(\bar\pe'(\me-\1 r)+r-\bar c-\frac{1}{2}\lVert\bar\pe'\se\rVert^2\Big)\Delta  \quad \text{and} \quad s^2:= \lVert\bar\pe'\se\rVert^2\Delta.
	\end{align*}
	Recall, the Expected Loss at time $t$ is defined by
	\begin{align*}
	\EL_t\left(\Loss_{t}\right) :=\E\Big[\left(Y_t -X_{t+\Delta}\right)^+ \Big|\,\, \cF_t\Big] =\E\Big[\left(Y_t-e^Z\right)^+\Big|\,\, \cF_t \Big].
	\end{align*}
	This expectation can be calculated as follows. Let
	\begin{align*}
	f_Z(z) = \frac{1}{\sqrt{2\pi}s}\exp{\Big(-\frac{(z-m)^2}{2s^2}\Big)}
	\end{align*}
	denote the probability density function of $Z$ then
	\begin{align*}
	\EL_t\left(\Loss_{t}\right) &= \int\nolimits_{-\infty}^\infty {\left(Y_t-e^z\right)^+ f_Z(z)\d z}=\int\nolimits_{-\infty}^{\ln(Y_t)} {\left(Y_t -e^z\right) f_Z(z)\d z}=Y_t I_1 + I_2,
	\end{align*}
	where $I_1:=\int_{-\infty}^{\ln(Y_t)}  f_Z(z)\d z ~\text{ and }~
	I_2:= -\int_{-\infty}^{\ln(Y_t)} e^z f_Z(z)\d z.$\\
	For the integral $I_1$ an appropriate change of variables yields
	\begin{align*}
	I_1= \int\nolimits_{-\infty}^{d_1} { \frac{1}{\sqrt{2\pi}}e^{-\frac{y^2}{2}}\d y} =  \Phi(d_1),
	\end{align*}
	where
	\begin{align*}
	d_1:&=\frac{\ln(Y_t)-m}{s}=\frac{1}{\lVert\bar\pe'\se\rVert\sqrt{\Delta }}\Big[ \ln\Big(\frac{Y_t}{X_t}\Big)-\big(\bar\pe'(\me-\1 r)+r-\bar c-\frac{\lVert\bar\pe'\se\rVert^2}{2}\big)\Delta \Big]
	\end{align*}
	and $\Phi(\cdot)$ denotes the cumulative distribution function of the standard normal distribution.
	The integral $I_2$ can be written as
	\begin{align*}
	I_2 &= -\int\nolimits_{-\infty}^{\ln(Y_t)} {\frac{1}{\sqrt{2\pi}s}\exp{\Big(z-\frac{z^2-2zm+m^2}{2s^2}\Big)}\d z}\\
	&=-e^{\frac{s^2}{2}+m}\int\nolimits_{-\infty}^{\ln(Y_t)} {\frac{1}{\sqrt{2\pi}s}\exp{\Big(-\frac{\left(z-(m+s^2)\right)^2}{2s^2}\Big)}\d z}.		
	\end{align*}
	Using the change of variables technique yields
	\begin{align*}
	I_2 &=-e^{\frac{s^2}{2}+m}\int\nolimits_{-\infty}^{d_2} {\frac{1}{\sqrt{2\pi}}e^{-\frac{y^2}{2}}\d y}=-e^{\frac{s^2}{2}+m}\Phi(d_2),
	\end{align*}
	where
	\begin{align*}
	d_2:&=\frac{\ln(Y_t)-(m+s^2)}{s}\\
	&=\frac{1}{\lVert\bar\pe'\se\rVert\sqrt{\Delta}}\bigg[ \ln\Big(\frac{\widetilde f(t,X_t)}{X_t}\Big)-\Big(\bar\pe'(\me-\1 r)+r-\bar c+\frac{1}{2}\lVert\bar\pe'\se\rVert^2\Big)\Delta \bigg].		
	\end{align*}
	Note, $\exp\{s^2/2+m\}= X_t\exp{\left(\left(\bar\pe'(\me-\1 r)+r-\bar c\right)\Delta \right)}$. Finally we obtain
	\begin{align*}
	\EL_t\left(\Loss_{t}\right)=Y_t I_1 + I_2 = \widetilde f(t,X_t) \Phi(d_1) -X_t\exp{\left(\left(\bar\pe'(\me-\1 r)+r-\bar c\right)\Delta \right)}\Phi(d_2).
	\end{align*}
	\section{Proof of Lemma \ref{lem:risk_measures_discrete}}\label{appB}
	Given an investment-consumption strategy $(\varphi,\eta)$ the wealth process $X$ evolves as follows
	\begin{align*}
	X_{t_{n+1}} &= e^{r\Delta}(X_{t_n}-\eta_{t_n}-\varphi_{t_n})+\varphi_{t_n}\cdot \exp\big\{\big(\mu-\frac{\sigma^2}{2}\big)\Delta + \sigma \big(W_{t_{n+1}}-W_{t_n}\big) \big\}.
	\end{align*}
	Note that given a probability level $\alpha \in (0,1)$ the Value at Risk at time $t_n$ is defined by
	$
	\VaR_{t_n}^\alpha(\Loss_{t_n}) := \inf\left\{l \in \R \,|\, \P(\Loss_{t_n} \geq l|\cG_{t_n}) \leq \alpha \right\}.
	$ 
	We have
	\begin{align*}
	\Loss_{t_n} = Y_{t_n} - X_{t_{n+1}}&=\widetilde f(t_n,X_{t_n}) - e^{r\Delta}(X_{t_n}-\eta_{t_n}-\varphi_{t_n})\\
	&-\varphi_{t_n}\cdot \exp\Big\{\Big(\mu-\frac{\sigma^2}{2}\Big)\Delta + \sigma \left(W_{t_{n+1}}-W_{t_n}\right)\Big\}
	\end{align*}
	and
	\begin{align*}
	\P\left(\Loss_{t_n} \geq l|\cG_{t_n}\right)&=\P\Big(\exp\Big\{\Big(\mu-\frac{\sigma^2}{2}\Big)\Delta + \sigma \left(W_{t_{n+1}}-W_{t_n}\right) \Big\}\\
	&\qquad \leq \frac{\widetilde f(t_n,X_{t_n}) - e^{r\Delta}(X_{t_n}-\eta_{t_n}-\varphi_{t_n}) - l}{\varphi_{t_n}}\Big|\cG_{t_n} \Big)\\
	&=\P\big(\Delta^{-\frac{1}{2}}\big(W_{t_{n+1}}-W_{t_n}\big) \leq z \big|\cG_{t_n} \big)=\Phi(z),
	\end{align*}
	where
	\begin{align*}
	z = \frac{1}{\sigma \sqrt{\Delta}} \Big(\ln\Big\{ \frac{\widetilde f(t_n,X_{t_n}) - e^{r\Delta}(X_{t_n}-\eta_{t_n}-\varphi_{t_n}) - l}{\varphi_{t_n}} \Big\} - \big(\mu-\frac{\sigma^2}{2}\big)\Delta\Big).
	\end{align*}
	We have used that the random variable $\Delta^{-\frac{1}{2}}(W_{t_{n+1}}-W_{t_n})$ is standard normally distributed and independent of $\cG_{t_n}$. Thus, $\P\left(\Loss_{t_n} \geq l|\cG_{t_n}\right) = \Phi(z) \leq \alpha$ is satisfied for $z \leq \Phi^{-1}(\alpha)$ yielding
	\begin{align*}
	l \geq \widetilde f(t_n,X_{t_n}) - e^{r\Delta}(X_{t_n}-\eta_{t_n}-\varphi_{t_n})-\exp\big\{\Phi^{-1}(\alpha)\sigma \sqrt{\Delta}+\big(\mu-\frac{\sigma^2}{2}\big)\Delta \big\}\varphi_{t_n}.
	\end{align*}
	Since the Dynamic Value at Risk is the smallest $l$ satisfying the above inequality we obtain $\VaR^{\alpha}_{t_n}(\Loss_{t_n}) = \widetilde \psi(t_n,X_{t_n},\varphi_{t_n},\eta_{t_n})$, where
	\begin{align*}
	\widetilde  \psi(t,x,\bar\varphi,\bar \eta)= &\widetilde f(t,x)- e^{r\Delta} (x-\bar \eta-\bar\varphi)-\exp\Big\{\Phi^{-1}(\alpha)\sigma\sqrt{\Delta} +\Big(\mu-\frac{\sigma^2}{2}\Big)\Delta\Big\}\bar\varphi.
	\end{align*}
	The Tail Conditional Expectation at time $t_{n}$ is defined by
	\begin{align*}
	\TCE_{t_n}^\alpha(\Loss_{t_n}) &= \E_{t_n}\left[\Loss_{t_n}|\Loss_{t_n} \geq \VaR_{t_n}^\alpha(\Loss_{t_n})\right]\\
	& = \frac{\E\left[\Loss_{t_n}\boldsymbol{I}\left(\Loss_{t_n} \geq \VaR_{t_n}^\alpha(\Loss_{t_n})\right)\big|\cG_{t_n}\right]}{\P\left(\Loss_{t_n} \geq \VaR_{t_n}^\alpha(\Loss_{t_n})\big|\cG_{t_n}\right)}\\
	&=\frac{1}{\alpha} \E\left[\Loss_{t_n}\boldsymbol{I}\left(\Loss_{t_n} \geq \VaR_{t_n}^\alpha(\Loss_{t_n})\right)\big|\cG_{t_n}\right],
	\end{align*}
	where $\boldsymbol{I}(A)$ denotes the indicator function of the set $A$.
	Using
	\begin{align*}
	\Loss_{t_n}=Y_{t_n}- X_{t_{n+1}}=&\widetilde f(t_n,X_{t_n})-e^{r\Delta}(X_{t_n}-\eta_{t_n}-\varphi_{t_n})\\
	&-\varphi_{t_n} \exp\Big\{\Big(\mu-\frac{\sigma^2}{2}\Big)\Delta + \sigma \left(W_{t_{n+1}}-W_{t_n}\right) \Big\}
	\end{align*}
	and
	\begin{align*}
	\VaR_{t_n}^\alpha(\Loss_{t_n}) = & \widetilde f(t_n,X_{t_n})- e^{r\Delta} (X_{t_n}-\eta_{t_n}-\varphi_{t_n})\\
	&-\varphi_{t_n}\exp\Big\{\Big(\mu-\frac{\sigma^2}{2}\Big)\Delta+\Phi^{-1}(\alpha)\sigma\sqrt{\Delta}\Big\}
	\end{align*}
	yields that the above inequality $\VaR_{t_n}^\alpha(\Loss_{t_n}) \leq \Loss_{t_n}$ is equivalent to $\Delta^{-\frac{1}{2}}(W_{t_{n+1}}-W_{t_n}) \leq \Phi^{-1}(\alpha)$. Thus,
	\begin{align*}
	&\E\left[\Loss_{t_n}\boldsymbol{I}\left(\Loss_{t_n} \geq \VaR_{t_n}^\alpha(\Loss_{t_n})\right)\big|\cG_{t_n}\right]\\
	=&\E\Big[\Loss_{t_n}\boldsymbol{I}\Big((W_{t_{n+1}}-W_{t_n})\Delta^{-\frac{1}{2}} \leq \Phi^{-1}(\alpha)\Big)\Big|\cG_{t_n}\Big]\\
	=& \left(Y_{t_n}-e^{r\Delta}\left(X_{t_n}-\eta_{t_n}-\varphi_{t_n}\right) \right) \E\Big[\boldsymbol{I}\Big((W_{t_{n+1}}-W_{t_n})\Delta^{-\frac{1}{2}} \leq \Phi^{-1}(\alpha)\Big)\Big|\cG_{t_n}\Big]\\
	&- \varphi_{t_n}\exp\Big\{\Big(\mu-\frac{\sigma^2}{2}\Big)\Delta \Big\}\E\Big[e^{\sigma \left(W_{t_{n+1}}-W_{t_n}\right) }\\
	&\hspace*{30ex}\cdot \boldsymbol{I}\Big((W_{t_{n+1}}-W_{t_n})\Delta^{-\frac{1}{2}} \leq \Phi^{-1}(\alpha)\Big)\Big|\cG_{t_n}\Big]\\
	=&\left(\widetilde f(t_n,X_{t_n})-e^{r\Delta}\left(X_{t_n}-\eta_{t_n}-\varphi_{t_n}\right) \right)\alpha- \varphi_{t_n}\exp\Big\{\Big(\mu-\frac{\sigma^2}{2}\Big)\Delta \Big\}\\
	&\quad\cdot\E\Big[e^{\sigma \left(W_{t_{n+1}}-W_{t_n}\right) }\boldsymbol{I}\Big((W_{t_{n+1}}-W_{t_n})\Delta^{-\frac{1}{2}} \leq \Phi^{-1}(\alpha)\Big)\Big|\cG_{t_n}\Big].
	\end{align*}
	We obtain
	\begin{align*}
	&\E\Big[e^{\sigma \left(W_{t_{n+1}}-W_{t_n}\right) }\boldsymbol{I}\Big((W_{t_{n+1}}-W_{t_n})\Delta^{-\frac{1}{2}} \leq \Phi^{-1}(\alpha)\Big)\Big|\cG_{t_n}\Big]\\
	&\qquad= \int\nolimits_{-\infty}^{\Phi^{-1}(\alpha)} e^{\sigma \sqrt{\Delta}z} \frac{1}{\sqrt{2\pi}}e^{-\frac{1}{2}z^2}\d z,
	\end{align*}
	where we have used that the random variable $(W_{t_{n+1}}-W_{t_n})\Delta^{-\frac{1}{2}}$ is standard normally distributed and independent of $\cG_{t_n}$. We calculate the above integral by making an appropriate change of variables
	\begin{align*}
	\int\nolimits_{-\infty}^{\Phi^{-1}(\alpha)} e^{\sigma \sqrt{\Delta}z} \frac{1}{\sqrt{2\pi}}e^{-\frac{1}{2}z^2}\d z &= e^{\frac{\sigma^2\Delta}{2}}\int\nolimits_{-\infty}^{\Phi^{-1}(\alpha)-\sigma\sqrt{\Delta}} \frac{1}{\sqrt{2\pi}}e^{-\frac{1}{2}y^2}\d y\\
	&= e^{\frac{\sigma^2\Delta}{2}} \Phi\left(\Phi^{-1}(\alpha)-\sigma\sqrt{\Delta}\right).
	\end{align*}
	Finally, the Dynamic Tail Conditional Expectation can be written as\\ $\TCE_{t_n}^\alpha(\Loss_{t_n})=$ $\widetilde \psi(t_n,X_{t_n},\varphi_{t_n},\eta_{t_n})$, where
	\begin{align*}
	\widetilde  \psi(t,x,\bar\varphi,\bar \eta)= \widetilde f(t,x) - e^{r\Delta} (x-\bar\eta-\bar\varphi)-\frac{1}{\alpha}e^{ \mu\Delta}\Phi\left(\Phi^{-1}(\alpha)-\sigma\sqrt{\Delta}\right)\bar\varphi.
	\end{align*}
	In order to prove the statement for the Dynamic Expected Loss we use that the wealth at time $X_{t_{n+1}}$ is  distributed as $e^{r\Delta}(X_{t_n}-\eta_{t_n}-\varphi_{t_n})+\varphi_{t_n}\cdot e^Z$, where $Z$ is normally distributed with mean $m$ and variance $s^2$, where
	$
	m:= (\mu-\sigma^2/2)\Delta  \text{ and }  s^2:= \sigma^2\Delta.
	$ 
	From the definition of the Dynamic Expected Loss we find
	\begin{align*}
	\EL_{t_n}\left(\Loss_{t_n}\right) &=\E\Big[\Big(Y_{t_n} -X_{t_{n+1}}\Big)^+ \Big|\,\, \cG_{t_n}\Big]\\
	&=\E\Big[\Big(\widetilde f(t_n,X_{t_n})-e^{r\Delta}(X_{t_n}-\eta_{t_n}-\varphi_{t_n})-\varphi_{t_n}\cdot e^Z\Big)^+\Big|\,\, \cG_{t_n} \Big].
	\end{align*}
	This expectation can be calculated as follows. Let
	\begin{align*}
	f_Z(z) = \frac{1}{\sqrt{2\pi}s}\exp{\Big(-\frac{(z-m)^2}{2s^2}\Big)}
	\end{align*}
	denote the probability density function of $Z$, then	
	\begin{align*}
	\EL_{t_n}\left(\Loss_{t_n}\right) &= \int\nolimits_{-\infty}^\infty {\left(\widetilde f(t_n,X_{t_n})-e^{r\Delta}(X_{t_n}-\eta_{t_n}-\varphi_{t_n})-\varphi_{t_n}\cdot e^z\right)^+ f_Z(z)\d z}\\
	&=\int\nolimits_{-\infty}^{\widetilde  d} {\left(\widetilde f(t_n,X_{t_n}) -e^{r\Delta}(X_{t_n}-\eta_{t_n}-\varphi_{t_n})-\varphi_{t_n}\cdot e^z\right) f_Z(z)\d z}\\
	&=\left(\widetilde f(t_n,X_{t_n})-e^{r\Delta}(X_{t_n}-\eta_{t_n}-\varphi_{t_n})\right) I_1 -\varphi_{t_n}I_2,
	\end{align*}
	where
	$
	I_1=\int\nolimits_{-\infty}^{\widetilde  d}f_Z(z)\d z \text{, } I_2= \int\nolimits_{-\infty}^{\widetilde  d} e^z f_Z(z)\d z
	$ 
	and
	\begin{align*}
	\widetilde  d = \ln\Big(\frac{\widetilde f(t_n,X_{t_n})-e^{r\Delta}(X_{t_n}-\eta_{t_n}-\varphi_{t_n})}{\varphi_{t_n}}\Big).
	\end{align*}
	The integral $I_1$ can be calculated by making an appropriate change of variables and we obtain
	\begin{align*}
	I_1=\int\nolimits_{-\infty}^{d_1} { \frac{1}{\sqrt{2\pi}}e^{-\frac{y^2}{2}}\d y} = \Phi(d_1),
	\end{align*}
	where
	\begin{align*}
	d_1:&=\frac{1}{s}\Big(\ln\Big(\frac{\widetilde f(t_n,X_{t_n})-e^{r\Delta}(X_{t_n}-\eta_{t_n}-\varphi_{t_n})}{\varphi_{t_n}}\Big)-m\Big)\\
	&=\frac{1}{\sigma\sqrt{\Delta}}\Big[ \ln\Big(\frac{\widetilde f(t_n,X_{t_n})- e^{r\Delta} (X_{t_n}-\eta_{t_n}-\varphi_{t_n})}{\varphi_{t_n}}\Big)-\Big(\mu-\frac{\sigma^2}{2}\Big)\Delta \Big].
	\end{align*}
	The integral $I_2$ can be written as
	\begin{align*}
	I_2 &= \int\nolimits_{-\infty}^{\widetilde  d} {\frac{1}{\sqrt{2\pi}s}\exp{\Big(z-\frac{z^2-2zm+m^2}{2s^2}\Big)}\d z}\\
	&=e^{\frac{s^2}{2}+m}\int\nolimits_{-\infty}^{\widetilde  d} {\frac{1}{\sqrt{2\pi}s}\exp{\Big(-\frac{\left(z-(m+s^2)\right)^2}{2s^2}\Big)}\d z}.		\end{align*}
	Using the change of variables method yields
	\begin{align*}
	I_2 &=e^{\frac{s^2}{2}+m}\int\nolimits_{-\infty}^{d_2} {\frac{1}{\sqrt{2\pi}}e^{-\frac{y^2}{2}}\d y}=e^{\frac{s^2}{2}+m}\Phi(d_2),
	\end{align*}
	where
	\begin{align*}
	d_2&=\frac{1}{s}\bigg[\ln\Big(\frac{\widetilde f(t_n,X_{t_n})-e^{r\Delta}(X_{t_n}-\eta_{t_n}-\varphi_{t_n})}{\varphi_{t_n}}\Big)-(m+s^2) \bigg]\\
	&=\frac{1}{\sigma\sqrt{\Delta}}\bigg[ \ln\Big(\frac{\widetilde f(t_n,X_{t_n})- e^{r\Delta} (X_{t_n}-\eta_{t_n}-\varphi_{t_n})}{\varphi_{t_n}}\Big)-\Big(\mu+\frac{\sigma^2}{2}\Big)\Delta \bigg].		
	\end{align*}
	Note, $e^{s^2/2+m}= e^{\mu\Delta}$, thus we obtain $\EL_{t_n}\left(\Loss_{t_n}\right)= \widetilde \psi(t_n,X_{t_n},\varphi_{t_n},\eta_{t_n})$, where
	\begin{align*}
	\widetilde  \psi(t,x,\bar\varphi,\bar \eta) = \left(\widetilde f(t,x) - e^{r\Delta} (x-\bar\eta-\bar\varphi)\right)\Phi(d_1)-e^{ \mu\Delta}\Phi(d_2)\bar\varphi.
	\end{align*}
\end{appendix}

\end{document}